\providecommand{\algorithmname}{Algorithm}
\newtheorem{lem}{Lemma}
\newtheorem{thm}{Theorem}
\newtheorem{assum}{Assumption}
\title{\LARGE \bf Nonlinear Deterministic Filter for Inertial Navigation and Bias Estimation with Guaranteed Performance}
\author{Ajay Singh Ludher, Marium Tawhid, and Hashim A. Hashim\\
	Software Engineering, Department of Engineering and Applied Science\\
	Thompson Rivers University,	Kamloops, British Columbia, Canada, V2C-0C8\\
	ludhera17@mytru.ca, tawhidm16@mytru.ca, and hhashim@tru.ca
	\thanks{This work was supported in part by Thompson Rivers University Internal research fund, RGS-2020/21 IRF, \# 102315.}}
\begin{document}
\bstctlcite{IEEEexample:BSTcontrol}

\maketitle
\thispagestyle{empty}
\pagestyle{empty}

\begin{abstract}
Unmanned vehicle navigation concerns estimating attitude, position,
and linear velocity of the vehicle the six degrees of freedom (6 DoF).
It has been known that the true navigation dynamics are highly nonlinear
modeled on the Lie Group of $\mathbb{SE}_{2}(3)$. In this paper,
a nonlinear filter for inertial navigation is proposed. The filter
ensures systematic convergence of the error components starting from
almost any initial condition. Also, the errors converge asymptotically
to the origin. Experimental results validates the robustness of the
proposed filter.
\end{abstract}

\section{Introduction}

Navigation solutions are key element in autonomous vehicles \cite{Hashim2021AESCTE,mourikis2007multi}.
Navigation estimation algorithms considers the vehicle orientation
(attitude), position, and linear velocity to be completely unknown
and require estimating the aforementioned three elements\cite{hashim2021ACC,hashim2021_COMP_ENG_PRAC}.
Navigation solutions become indispensable if global positioning systems
(GPS) are unreliable. A set of measurements is required for the estimation
process. The vehicle's orientation can be estimated using for instance,
inertial measurement unit (IMU) \cite{hashim2019SO3Wiley,hashim2018SO3Stochastic,odry2021open,jensen2011generalized,odry2018kalman},
while vehicle's pose (orientation and position) can be estimated using
IMU and vision unit \cite{hashim2020SE3Stochastic}. Recently, other
potential solutions emerged to estimate the vehicle's pose as well
as map the unknown environment such as nonlinear deterministic filter
for simultaneous localization and mapping (SLAM) \cite{hashim2020TITS_SLAM,hashim2021T_SMCS_SLAM}
and nonlinear stochastic filter for SLAM \cite{Hashim2021AESCTE}.
The family of pose and SLAM filters requires linear velocity to be
known. In practice, linear velocity is hard to obtain in GPS-denied
regions and it is challenging to reconstruct.

Traditionally, inertial navigation used to addressed using Gaussian
filters, such as Kalman filter \cite{zhao2016analysis}, extended
Kalman filter \cite{hsu2017wearable}, unscented Kalman filter \cite{allotta2016unscented},
and particle filter which is non-Gaussian filter \cite{blok2019robot}
among others. Gaussian filters are based on linear approximation while
particle filters do not have clear measure for optimal performance.
However, navigation dynamics of a vehicle navigating in three dimensional
(3D) space are highly nonlinear. Also, the dynamics cannot be classified
as right or left invariant. Thereby, a recent development of filters
for inertial navigation on the Lie Group have been developed, for
instance invariant extended Kalman filter (IEKF) on the Lie Group
of $\mathbb{SE}_{2}(3)$ \cite{barrau2016invariant}, a Riccati filter
design \cite{hua2018riccati}, and a nonlinear stochastic filter on
the Lie group of $\mathbb{SE}_{2}(3)=\mathbb{SO}\left(3\right)\times\mathbb{R}^{3}\times\mathbb{R}^{3}\subset\mathbb{R}^{5\times5}$
\cite{hashim2021ACC,hashim2021_COMP_ENG_PRAC}. The filters in \cite{barrau2016invariant,hua2018riccati,hashim2021ACC,hashim2021_COMP_ENG_PRAC}
are developed directly on $\mathbb{SE}_{2}(3)$, however measures
of transient and steady-state can be defined.

To sum up, the navigation dynamics are highly nonlinear posed on the
Lie group of $\mathbb{SE}_{2}(3)$. Also, the transient and steady-state
error has to be taken into account. It is worth noting that navigation
filters relies on gyroscope and accelerometer measurements, and therefore,
uncertainties in gyroscope and accelerometer measurements should be
addressed. Hence, this paper consider the previously mentioned challenges
through the following set of contributions: (1) A nonlinear filter
on the Lie Group of $\mathbb{SE}_{2}(3)$ for inertial navigation
with predefined measures of transient and steady-state performance
is proposed; (2) the closed loop errors are shown to be almost globally
asymptotically stable, and when provided with data from low-cost IMU
and feature sensors; and (3) the presence of unknown bias in IMU measurements
is successfully tackled. 

The rest of the paper is organized as follows: Section \ref{sec:SE3_Problem-Formulation}
presents important notation and identities, and introduces the true
navigation problem. Section \ref{sec:SLAM_Filter} present the concept
of guaranteed measures of transient and steady-state performance and
proposes a novel nonlinear filter. Section \ref{sec:SE3_Simulations}
shows experimental results. Finally, Section \ref{sec:SE3_Conclusion}
concludes the work.

\section{Navigation Framework\label{sec:SE3_Problem-Formulation}}

\subsection{Preliminaries}

Let $\mathbb{R}$ set of real numbers, and $\mathbb{R}^{n\times m}$
denote denote an $n$-by-$m$ real real numbers. $\mathbf{I}_{n}$
is an $n$-by-$n$ identity matrix and $0_{n\times m}$ is an $n$-by-$m$
dimensional matrix of zeros. $\left\{ \mathcal{I}\right\} $ stands
for the fixed inertial-frame and $\left\{ \mathcal{B}\right\} $ corresponds
to the fixed body-frame attached to a vehicle moving in 3D space.
The vehicle's orientation in 3D space, known as attitude, is described
by $R\in\mathbb{SO}\left(3\right)$ where $\mathbb{SO}\left(3\right)$
is a short-hand notation for Special Orthogonal Group given by
\[
\mathbb{SO}\left(3\right)=\left\{ \left.R\in\mathbb{R}^{3\times3}\right|RR^{\top}=R^{\top}R=\mathbf{I}_{3}\text{, }{\rm det}\left(R\right)=+1\right\} 
\]
where ${\rm det}\left(\cdot\right)$ stands for a determinant. $\mathfrak{so}\left(3\right)$
is the Lie-algebra of $\mathbb{SO}\left(3\right)$ denoted by
\begin{align*}
\mathfrak{so}\left(3\right) & =\left\{ \left.\left[x\right]_{\times}\in\mathbb{R}^{3\times3}\right|\left[x\right]_{\times}^{\top}=-\left[x\right]_{\times},x\in\mathbb{R}^{3}\right\} 
\end{align*}
\[
\left[x\right]_{\times}=\left[\begin{array}{ccc}
0 & -x_{3} & x_{2}\\
x_{3} & 0 & -x_{1}\\
-x_{2} & x_{1} & 0
\end{array}\right]\in\mathfrak{so}\left(3\right),\hspace{1em}x=\left[\begin{array}{c}
x_{1}\\
x_{2}\\
x_{3}
\end{array}\right]
\]
The following mappings are defined
\begin{align}
\mathbf{vex}([x]_{\times}) & =x,\hspace{1em}\forall x\in\mathbb{R}^{3}\label{eq:NAV_VEX}\\
\boldsymbol{\mathcal{P}}_{a}(M) & =\frac{1}{2}(M-M^{\top})\in\mathfrak{so}\left(3\right),\hspace{1em}\forall M\in\mathbb{R}^{3\times3}\label{eq:NAV_Pa}\\
\boldsymbol{\Upsilon}\left(M\right) & =\mathbf{vex}(\boldsymbol{\mathcal{P}}_{a}(M))\in\mathbb{R}^{3},\hspace{1em}\forall M\in\mathbb{R}^{3\times3}\label{eq:NAV_VEX_a}
\end{align}
Define $||R||_{{\rm I}}$ as the Euclidean distance of $R\in\mathbb{SO}\left(3\right)$
such that
\begin{equation}
||R||_{{\rm I}}=\frac{1}{4}{\rm Tr}\{\mathbf{I}_{3}-R\}\in\left[0,1\right]\label{eq:NAV_Ecul_Dist}
\end{equation}
For more information about attitude and pose notation visit \cite{hashim2019SO3Wiley,hashim2020SE3Stochastic}.
The extended form of the Special Euclidean Group denoted by $\mathbb{SE}_{2}\left(3\right)=\mathbb{SO}\left(3\right)\times\mathbb{R}^{3}\times\mathbb{R}^{3}\subset\mathbb{R}^{5\times5}$
is defined by

\begin{align}
\mathbb{SE}_{2}\left(3\right)= & \{\left.X\in\mathbb{R}^{5\times5}\right|R\in\mathbb{SO}\left(3\right),P,V\in\mathbb{R}^{3}\}\label{eq:NAV_SE2_3}\\
X= & \left[\begin{array}{ccc}
R & P & V\\
0_{1\times3} & 1 & 0\\
0_{1\times3} & 0 & 1
\end{array}\right]\in\mathbb{SE}_{2}\left(3\right)\label{eq:NAV_X}
\end{align}
where $X$, $R$, $P$, and $V$ denote a homogeneous navigation matrix,
rigid-body's orientation, position, and linear velocity, respectively,
Let $\mathcal{U}_{\mathcal{M}}=\mathfrak{so}\left(3\right)\times\mathbb{R}^{3}\times\mathbb{R}^{3}\times\mathbb{R}$
be a submanifold of $\mathbb{R}^{5\times5}$ where
\begin{align}
\mathcal{U}_{\mathcal{M}} & =\left\{ \left.u\left(\text{\ensuremath{\left[\Omega\right]_{\times}}},V,a,\kappa\right)\right|\text{\ensuremath{\left[\Omega\right]_{\times}}}\in\mathfrak{so}\left(3\right),V,a\in\mathbb{R}^{3},\kappa\in\mathbb{R}\right\} \nonumber \\
& u\left(\text{\ensuremath{\left[\Omega\right]_{\times}}},V,a,\kappa\right)=\left[\begin{array}{ccc}
\left[\Omega\right]_{\times} & V & a\\
0_{1\times3} & 0 & 0\\
0_{1\times3} & \kappa & 0
\end{array}\right]\in\mathcal{U}_{\mathcal{M}}\label{eq:NAV_u}
\end{align}
with $\Omega\in\mathbb{R}^{3}$ and $a\in\mathbb{R}^{3}$ being the
vehicle's true angular velocity and the apparent acceleration composed
of all non-gravitational forces affecting the vehicle. It should be
remarked that $R,\Omega,a\in\left\{ \mathcal{B}\right\} $ and $P,V\in\left\{ \mathcal{I}\right\} $.

\subsection{Navigation Dynamics and Measurements}

The true 3D navigation dynamics are highly nonlinear described by
\cite{hashim2021ACC,hashim2021_COMP_ENG_PRAC}
\begin{equation}
\begin{cases}
\dot{R} & =R\left[\Omega\right]_{\times}\\
\dot{P} & =V\\
\dot{V} & =Ra+\overrightarrow{\mathtt{g}}
\end{cases},\hspace{1em}\underbrace{\dot{X}=XU-\mathcal{\mathcal{G}}X}_{\text{Compact form}}\label{eq:NAV_True_dot}
\end{equation}
with the left portion of \eqref{eq:NAV_True_dot} being the detailed
navigation dynamics and the notation were defined in \eqref{eq:NAV_X}
and \eqref{eq:NAV_u}. Also, $\overrightarrow{\mathtt{g}}$ refers
to a gravity vector. Note that $\dot{X}:\mathbb{SE}_{2}\left(3\right)\times\mathbb{R}^{3}\times\mathbb{R}^{3}\rightarrow T_{X}\mathbb{SE}_{2}\left(3\right)$
\cite{hashim2021ACC,hashim2021_COMP_ENG_PRAC}. The right portion
of \eqref{eq:NAV_True_dot} describes the compact form $\dot{X}:\mathbb{SE}_{2}\left(3\right)\times\mathcal{U}_{\mathcal{M}}\rightarrow T_{X}\mathbb{SE}_{2}\left(3\right)$
with $U=\underbrace{\left[\begin{array}{ccc}
	\left[\Omega\right]_{\times} & 0_{3\times1} & a\\
	0_{1\times3} & 0 & 0\\
	0_{1\times3} & 1 & 0
	\end{array}\right]}_{u(\text{\ensuremath{\left[\Omega\right]_{\times}}},0_{3\times1},a,1)}$, $\mathcal{\mathcal{G}}=\underbrace{\left[\begin{array}{ccc}
	0_{3\times3} & 0_{3\times1} & -\overrightarrow{\mathtt{g}}\\
	0_{1\times3} & 0 & 0\\
	0_{1\times3} & 1 & 0
	\end{array}\right]}_{u(0_{3\times3},0_{3\times1},-\overrightarrow{\mathtt{g}},1)}$ \cite{hashim2021ACC,hashim2021_COMP_ENG_PRAC}. The measurements of $\Omega$ and $a$ follows:
\begin{equation}
\begin{cases}
\Omega_{m} & =\Omega+b_{\Omega}\in\mathbb{R}^{3}\\
a_{m} & =a+b_{a}\in\mathbb{R}^{3}
\end{cases}\label{eq:NAV_XVelcoity}
\end{equation}
where $b_{\Omega}$ and $b_{a}$ denote unknown bias. Consider $n$
features in the environment where $p_{i}\in\mathbb{R}^{3}$ denotes
the $i$th feature position $p_{i}\in\left\{ \mathcal{I}\right\} $
for all $i=1,2,\ldots,n$. Let the $i$th feature measurement be
\begin{equation}
y_{i}=R^{\top}(p_{i}-P)+b_{i}^{y}+n_{i}^{y}\in\mathbb{R}^{3}\label{eq:NAV_Vec_Landmark}
\end{equation}
whee $b_{i}^{y}$ is an unknown bias and $n_{i}^{y}$ is noise. 

\begin{assum}\label{Assum:NAV_1Landmark}At least three non-collinear
	features available for measurement at each time instant. \end{assum}

\subsection{Navigation Matrix: Estimate, Error, and Measurements Setup\label{subsec:Navigation-Matrix}}

Let the estimate of the true homogeneous navigation matrix $X\in\mathbb{SE}_{2}\left(3\right)$
defined in \eqref{eq:NAV_X} be
\begin{equation}
\hat{X}=\left[\begin{array}{ccc}
\hat{R} & \hat{P} & \hat{V}\\
0_{1\times3} & 1 & 0\\
0_{1\times3} & 0 & 1
\end{array}\right]\in\mathbb{SE}_{2}\left(3\right)\label{eq:NAV_X_est}
\end{equation}
where $\hat{R}\in\mathbb{SO}\left(3\right)$, $\hat{P}\in\mathbb{R}^{3}$,
and $\hat{V}\in\mathbb{R}^{3}$ denote estimates of the true orientation,
position, and velocity, respectively. Define the error between $X$
and $\hat{X}$ as
\begin{align}
\tilde{X}=X\hat{X}^{-1} & =\left[\begin{array}{ccc}
\tilde{R} & \tilde{P} & \tilde{V}\\
0_{1\times3} & 1 & 0\\
0_{1\times3} & 0 & 1
\end{array}\right]\label{eq:NAV_X_error}
\end{align}
where $\tilde{R}=R\hat{R}^{\top}$, $\tilde{P}=P-\tilde{R}\hat{P}$,
and $\tilde{V}=V-\tilde{R}\hat{V}$. A set of measurements have to
be defined to be subsequently used as part of the filter design, part
of the following definitions can be found at \cite{hashim2020LetterSLAM,hashim2021ACC,hashim2021_COMP_ENG_PRAC}.
Let us begin by defining the error:
\begin{align}
\overset{\circ}{\tilde{y}}_{i} & =\overline{p}_{i}-\tilde{X}^{-1}\overline{p}_{i}=\overline{p}_{i}-\hat{X}\overline{y}_{i}\nonumber \\
& =\left[(p_{i}-\hat{R}y_{i}-\hat{P})^{\top},0,0\right]^{\top}\in\overset{\circ}{\mathcal{M}}\label{eq:NAV_y_error}
\end{align}
where $\overset{\circ}{\tilde{y}}_{i}=[\tilde{y}_{i}^{\top},0,0]^{\top}\in\overset{\circ}{\mathcal{M}}$,
$p_{i}-\hat{R}y_{i}-\hat{P}=\tilde{p}_{i}-\tilde{P}$, $\tilde{p}_{i}=\hat{p}_{i}-\tilde{R}p_{i}$,
and $\tilde{P}=P-\tilde{R}\hat{P}$. Define the following components:
$s_{T}=\sum_{i=1}^{n}s_{i}$, $p_{c}=\frac{1}{s_{T}}\sum_{i=1}^{n}s_{i}p_{i}$,
and $M=\sum_{i=1}^{n}s_{i}(p_{i}-p_{c})(p_{i}-P)^{\top}=\sum_{i=1}^{n}s_{i}p_{i}p_{i}^{\top}-2\sum_{i=1}^{n}s_{i}p_{i}p_{c}^{\top}+s_{T}p_{c}p_{c}^{\top}$
such that
\begin{align}
M & =\sum_{i=1}^{n}s_{i}p_{i}p_{i}^{\top}-s_{T}p_{c}p_{c}^{\top}\label{eq:NAV_M}
\end{align}
where $s_{i}>0$ denotes the sensor confidence level of the $i$th
landmark, and $n\geq3$ as defined in Assumption \ref{Assum:NAV_1Landmark}.
One finds $\sum_{i=1}^{n}s_{i}(p_{i}-p_{c})y_{i}^{\top}\hat{R}^{\top}=\sum_{i=1}^{n}s_{i}(p_{i}-p_{c})(p_{i}-P)^{\top}\tilde{R}$
which means that
\begin{align}
\sum_{i=1}^{n}s_{i}\left(p_{i}-p_{c}\right)y_{i}^{\top}\hat{R}^{\top}= & M\tilde{R}\label{eq:NAV_MR}
\end{align}
Additionally, the following result can be obtained $\sum_{i=1}^{n}s_{i}\tilde{y}_{i}=\sum_{i=1}^{n}s_{i}(p_{i}-\hat{R}y_{i}-\hat{P})=\sum_{i=1}^{n}s_{i}(p_{i}-\tilde{R}^{\top}p_{i})+\sum_{i=1}^{n}s_{i}\tilde{R}^{\top}\tilde{P}$
such that $\sum_{i=1}^{n}s_{i}\tilde{y}_{i}=s_{T}\tilde{R}^{\top}\tilde{P}_{\varepsilon}$
with $\tilde{P}_{\varepsilon}=\tilde{P}-(\mathbf{I}_{3}-\tilde{R})p_{c}$.
Note that $\tilde{R}\rightarrow\mathbf{I}_{3}$ indicates that $\tilde{P}_{\varepsilon}\rightarrow\tilde{P}$
and $\tilde{R}=\mathbf{I}_{3}$ implying that $\tilde{R}^{\top}\tilde{P}_{\varepsilon}=\tilde{P}_{\varepsilon}=\tilde{P}$.
Summing up the above derivations, the following set expressed in terms
of vector measurements will be used in the filter design \cite{hashim2021ACC,hashim2021_COMP_ENG_PRAC}:
\begin{equation}
\begin{cases}
p_{c} & =\frac{1}{s_{T}}\sum_{i=1}^{n}s_{i}p_{i},\hspace{1em}s_{T}=\sum_{i=1}^{n}s_{i}\\
M & =\sum_{i=1}^{n}s_{i}p_{i}p_{i}^{\top}-s_{T}p_{c}p_{c}^{\top}\\
\tilde{y}_{i} & =p_{i}-\hat{R}y_{i}-\hat{P}\\
M\tilde{R} & =\sum_{i=1}^{n}s_{i}\left(p_{i}-p_{c}\right)y_{i}^{\top}\hat{R}^{\top}\\
\tilde{R}^{\top}\tilde{P}_{\varepsilon} & =\frac{1}{s_{T}}\sum_{i=1}^{n}s_{i}\tilde{y}_{i}
\end{cases}\label{eq:NAV_Set_Measurements}
\end{equation}

\subsection{Nonlinear Filter Framework and Error Dynamics\label{subsec:General-Nonlinear-Observer}}

\begin{lem}
	\label{Lemm:SLAM_Lemma1}Let $\tilde{R}\in\mathbb{SO}\left(3\right)$,
	$M=M^{\top}\in\mathbb{R}^{3\times3}$. Define $\overline{\mathbf{M}}={\rm Tr}\{M\}\mathbf{I}_{3}-M$
	such that $\underline{\lambda}_{\overline{\mathbf{M}}}=\underline{\lambda}(\overline{\mathbf{M}})$
	and $\overline{\lambda}_{\overline{\mathbf{M}}}=\overline{\lambda}(\overline{\mathbf{M}})$
	are the minimum and the maximum eigenvalues of $\overline{\mathbf{M}}$,
	respectively. Then, one has
	\begin{align}
	\frac{\underline{\lambda}_{\overline{\mathbf{M}}}}{2}(1+{\rm Tr}\{\tilde{R}\})||M\tilde{R}||_{{\rm I}} & \leq||\boldsymbol{\Upsilon}(M\tilde{R})||^{2}\leq2\overline{\lambda}_{\overline{\mathbf{M}}}||M\tilde{R}||_{{\rm I}}\label{eq:SLAM_lemm1}
	\end{align}
	\begin{proof}See (\cite{hashim2019SO3Wiley}, Lemma 1).\end{proof}
\end{lem}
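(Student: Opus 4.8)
The plan is to pass to the angle--axis (Rodrigues) form of $\tilde{R}$ and then reduce both sides of \eqref{eq:SLAM_lemm1} to Rayleigh--quotient estimates for the symmetric matrix $\overline{\mathbf{M}}$. I read $||M\tilde{R}||_{{\rm I}}$ as $\tfrac14{\rm Tr}\{M(\mathbf{I}_{3}-\tilde{R})\}=\tfrac14{\rm Tr}\{M-M\tilde{R}\}$, which reduces to \eqref{eq:NAV_Ecul_Dist} when $M=\mathbf{I}_{3}$. Moreover the $M$ of interest (from \eqref{eq:NAV_M} under Assumption~\ref{Assum:NAV_1Landmark}) is symmetric positive semi-definite, so $\overline{\mathbf{M}}={\rm Tr}\{M\}\mathbf{I}_{3}-M\succeq 0$ and $0\le\underline{\lambda}_{\overline{\mathbf{M}}}\le\overline{\lambda}_{\overline{\mathbf{M}}}$; this positivity is exactly what the eigenvalue estimates below will need.

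First I would write $\tilde{R}=\mathbf{I}_{3}+\sin\theta\,[u]_{\times}+(1-\cos\theta)[u]_{\times}^{2}$ with $||u||=1$, $\theta\in[0,\pi]$, so that $1+{\rm Tr}\{\tilde{R}\}=4\cos^{2}(\theta/2)$ and $[u]_{\times}^{2}=uu^{\top}-\mathbf{I}_{3}$; when $\theta=0$ all three terms of \eqref{eq:SLAM_lemm1} vanish, so assume $\sin(\theta/2)\neq 0$. Using $M=M^{\top}$, $\boldsymbol{\mathcal{P}}_{a}(M\tilde{R})=\tfrac12(M\tilde{R}-\tilde{R}^{\top}M)$, where the $\mathbf{I}_{3}$ part cancels and leaves a $\sin\theta$ term $\tfrac12(M[u]_{\times}+[u]_{\times}M)$ and a $(1-\cos\theta)$ term $\tfrac12(M[u]_{\times}^{2}-[u]_{\times}^{2}M)$. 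Then I invoke the standard identity $[v]_{\times}A+A^{\top}[v]_{\times}=[({\rm Tr}\{A\}\mathbf{I}_{3}-A)v]_{\times}$, which at $A=M$ gives $M[u]_{\times}+[u]_{\times}M=[\overline{\mathbf{M}}u]_{\times}$, together with $ab^{\top}-ba^{\top}=[b\times a]_{\times}$ and $[u]_{\times}^{2}=uu^{\top}-\mathbf{I}_{3}$, which give $M[u]_{\times}^{2}-[u]_{\times}^{2}M=(Mu)u^{\top}-u(Mu)^{\top}=-[\,[u]_{\times}\overline{\mathbf{M}}u\,]_{\times}$ (the sign from $u\times u=0$). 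Applying $\mathbf{vex}$ and setting $w:=\overline{\mathbf{M}}u$ yields the key closed form
\[
\boldsymbol{\Upsilon}(M\tilde{R})=\tfrac12\sin\theta\,w-\tfrac12(1-\cos\theta)[u]_{\times}w .
\]

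Since $w^{\top}[u]_{\times}w=w^{\top}(u\times w)=0$, squaring and using half--angle formulas gives $||\boldsymbol{\Upsilon}(M\tilde{R})||^{2}=\sin^{2}(\tfrac\theta2)\left(\cos^{2}(\tfrac\theta2)||w||^{2}+\sin^{2}(\tfrac\theta2)||[u]_{\times}w||^{2}\right)$, while ${\rm Tr}\{M[u]_{\times}\}=0$ and ${\rm Tr}\{M[u]_{\times}^{2}\}=u^{\top}Mu-{\rm Tr}\{M\}=-u^{\top}\overline{\mathbf{M}}u$ give $||M\tilde{R}||_{{\rm I}}=\tfrac12\sin^{2}(\tfrac\theta2)\,(u^{\top}w)$. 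Substituting these and $1+{\rm Tr}\{\tilde{R}\}=4\cos^{2}(\tfrac\theta2)$ into \eqref{eq:SLAM_lemm1} and cancelling $\sin^{2}(\tfrac\theta2)>0$, the claim becomes
\[
\underline{\lambda}_{\overline{\mathbf{M}}}\cos^{2}(\tfrac\theta2)(u^{\top}w)\le\cos^{2}(\tfrac\theta2)||w||^{2}+\sin^{2}(\tfrac\theta2)||[u]_{\times}w||^{2}\le\overline{\lambda}_{\overline{\mathbf{M}}}(u^{\top}w).
\]
For the right inequality I use $||[u]_{\times}w||\le||w||$ so the middle is $\le||w||^{2}=||\overline{\mathbf{M}}u||^{2}$, and then the spectral decomposition of $\overline{\mathbf{M}}\succeq 0$ with $||u||=1$ gives $||\overline{\mathbf{M}}u||^{2}\le\overline{\lambda}_{\overline{\mathbf{M}}}\,u^{\top}\overline{\mathbf{M}}u=\overline{\lambda}_{\overline{\mathbf{M}}}(u^{\top}w)$. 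For the left inequality the same decomposition gives $||w||^{2}\ge\underline{\lambda}_{\overline{\mathbf{M}}}(u^{\top}w)$; multiplying by $\cos^{2}(\tfrac\theta2)\ge 0$ and discarding the nonnegative term $\sin^{2}(\tfrac\theta2)||[u]_{\times}w||^{2}$ finishes it, and undoing the cancellation recovers \eqref{eq:SLAM_lemm1}.

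The only genuinely delicate step is the algebra producing the closed form for $\boldsymbol{\Upsilon}(M\tilde{R})$: one must reduce $M[u]_{\times}^{2}-[u]_{\times}^{2}M$ correctly and recognize that both surviving pieces are built from the single vector $w=\overline{\mathbf{M}}u$ --- this is precisely what kills the cross term and lets the quadratic-form bounds close. The rest is routine trigonometric identities and the standard $\lambda_{\min}/\lambda_{\max}$ estimates; the two points worth flagging when writing it up are the intended meaning of the symbol $||M\tilde{R}||_{{\rm I}}$ and the use of $\overline{\mathbf{M}}\succeq 0$.
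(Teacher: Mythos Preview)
The paper does not actually prove this lemma: its entire proof is the one-line citation ``See (\cite{hashim2019SO3Wiley}, Lemma~1),'' so there is no in-paper argument to compare against. Your proposal is a correct, self-contained proof, and in fact it follows what is the standard route for this kind of estimate: pass to the Rodrigues parametrisation of $\tilde{R}$, use the identity $M[u]_{\times}+[u]_{\times}M=[\overline{\mathbf{M}}u]_{\times}$ together with $[u]_{\times}^{2}=uu^{\top}-\mathbf{I}_{3}$ to obtain the closed form $\boldsymbol{\Upsilon}(M\tilde{R})=\tfrac12\sin\theta\,w-\tfrac12(1-\cos\theta)[u]_{\times}w$ with $w=\overline{\mathbf{M}}u$, and then reduce \eqref{eq:SLAM_lemm1} to the quadratic-form inequalities $\underline{\lambda}_{\overline{\mathbf{M}}}\,u^{\top}\overline{\mathbf{M}}u\le u^{\top}\overline{\mathbf{M}}^{2}u\le\overline{\lambda}_{\overline{\mathbf{M}}}\,u^{\top}\overline{\mathbf{M}}u$. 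I checked each algebraic identity and the half-angle bookkeeping and they are all correct.

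One point you already flagged is worth stating explicitly in the write-up: the lemma as printed only assumes $M=M^{\top}$, but your spectral step $||\overline{\mathbf{M}}u||^{2}\le\overline{\lambda}_{\overline{\mathbf{M}}}\,u^{\top}\overline{\mathbf{M}}u$ (and its lower counterpart) uses $\overline{\mathbf{M}}\succeq 0$, equivalently $\underline{\lambda}_{\overline{\mathbf{M}}}\ge 0$. This is satisfied in the paper's setting because $M$ in \eqref{eq:NAV_M} is positive semi-definite under Assumption~\ref{Assum:NAV_1Landmark}, and indeed without it the quantity $||M\tilde{R}||_{{\rm I}}=\tfrac12\sin^{2}(\theta/2)\,u^{\top}\overline{\mathbf{M}}u$ need not even be nonnegative. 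So your proof is complete for the case the paper actually uses; just make the hypothesis $\overline{\mathbf{M}}\succeq 0$ explicit rather than leaving it as a side remark.
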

From Lemma \ref{Lemm:SLAM_Lemma1}, define $\overline{\mathbf{M}}={\rm Tr}\{M\}\mathbf{I}_{3}-M$
given that $\lambda(M)=\{\lambda_{1},\lambda_{2},\lambda_{3}\}$ with
$\lambda_{3}\geq\lambda_{2}\geq\lambda_{1}$. In view of Assumption
\eqref{Assum:NAV_1Landmark}, at least two of the eigenvalues in the
set $\lambda(M)$ are greater than zero and therefore $\lambda(\overline{\mathbf{M}})=\{\lambda_{3}+\lambda_{2},\lambda_{3}+\lambda_{1},\lambda_{2}+\lambda_{1}\}$
see Section 4.2 in \cite{hashim2019SO3Wiley}.

\section{Nonlinear Navigation Filter Design \label{sec:SLAM_Filter}}

\subsection{Systematic Convergence}

Consider the following error in view of the measurements in \eqref{eq:NAV_Set_Measurements}:
\begin{equation}
e=[e_{1},e_{2},e_{3},e_{4}]^{\top}=\left[||M\tilde{R}||_{{\rm I}},\tilde{P}_{\varepsilon}^{\top}\tilde{R}\right]^{\top}\in\mathbb{R}^{4}\label{eq:NAV_Vec_error}
\end{equation}
Define $\xi_{i}:\mathbb{R}_{+}\to\mathbb{R}_{+}$ as a positive smooth
and time-decreasing function \cite{bechlioulis2008robust}:
\begin{equation}
\xi_{i}\left(t\right)=\left(\xi_{i}^{0}-\xi_{i}^{\infty}\right)\exp\left(-\ell_{i}t\right)+\xi_{i}^{\infty},\hspace{1em}\forall i=1,2,\ldots,4\label{eq:NAV_Presc}
\end{equation}
with $\xi_{i}\left(0\right)=\xi_{i}^{0}>0$ being upper bound of the
known large set, $\xi_{i}^{\infty}>0$ being upper bound of the small
set, and $\ell_{i}>0$ being convergence rate of $\xi_{i}\left(t\right)$
from $\xi_{i}^{0}$ to $\xi_{i}^{\infty}$. Define the error $e_{i}$
as \cite{bechlioulis2008robust}:
\begin{equation}
e_{i}=\xi_{i}\mathcal{N}(E_{i})\label{eq:NAV_e_Trans}
\end{equation}
where $E_{i}\in\mathbb{R}$ denotes the transformed error which is
unconstrained, and $\mathcal{N}(E_{i})$ denotes a smooth function
which is differentiable, strictly increasing, and bounded, for more
information see \cite{hashim2019SO3Wiley,hashim2020TITS_SLAM}. The
inverse transformation of \eqref{eq:NAV_Trans1} is given by
\begin{equation}
E_{i}=\mathcal{N}^{-1}(e_{i}/\xi_{i})\label{eq:NAV_Trans1}
\end{equation}
The inverse transformation in \eqref{eq:NAV_Trans1} is defined by
\cite{hashim2019SO3Wiley,hashim2020TITS_SLAM,bechlioulis2008robust}
\begin{equation}
\begin{aligned}E_{i}= & \frac{1}{2}\begin{cases}
\text{ln}\frac{\underline{\delta}_{i}+e_{i}/\xi_{i}}{\bar{\delta}_{i}-e_{i}/\xi_{i}}, & \bar{\delta}_{i}>\underline{\delta}_{i}\text{ if }e_{i}\left(0\right)\geq0\\
\text{ln}\frac{\underline{\delta}_{i}+e_{i}/\xi_{i}}{\bar{\delta}_{i}-e_{i}/\xi_{i}}, & \underline{\delta}_{i}>\bar{\delta}_{i}\text{ if }e_{i}\left(0\right)<0
\end{cases}\end{aligned}
\label{eq:NAV_trans2}
\end{equation}
Let us define 
\begin{equation}
\begin{split}\Delta_{i} & =\frac{1}{2\xi_{i}}\frac{\partial\mathcal{N}^{-1}\left(e_{i}/\xi_{i}\right)}{\partial\left(e_{i}/\xi_{i}\right)}=\frac{1}{2\xi_{i}}\left(\frac{1}{\underline{\delta}_{i}+e_{i}/\xi_{i}}+\frac{1}{\bar{\delta}_{i}-e_{i}/\xi_{i}}\right)\end{split}
\label{eq:SE3PPF_mu}
\end{equation}
Accordingly, the dynamics of $\dot{E}_{i}$ are
\begin{align}
\dot{E}_{i} & =\Delta_{i}\left(\frac{d}{dt}e_{i}-\frac{\dot{\xi}_{i}}{\xi_{i}}e_{i}\right)\label{eq:SE3PPF_Trans_dot1}
\end{align}
Or to put simply
\begin{equation}
\dot{E}=\left[\begin{array}{cc}
\Delta_{R} & 0_{1\times3}\\
0_{3\times1} & \Delta_{P}
\end{array}\right]\left(\frac{d}{dt}e-\left[\begin{array}{cc}
\mu_{R} & 0_{1\times3}\\
0_{3\times1} & \mu_{P}
\end{array}\right]e\right)\label{eq:SE3PPF_Trans_dot}
\end{equation}
such that $\mu_{R}=\dot{\xi}_{1}/\xi_{1}$, $\mu_{P}={\rm diag}(\dot{\xi}_{2}/\xi_{2},\dot{\xi}_{3}/\xi_{3},\dot{\xi}_{4}/\xi_{4})$,
$\Delta_{R}=\Delta_{1}$, and $\Delta_{P}={\rm diag}(\Delta_{2},\Delta_{3},\Delta_{4})$
where $\mu_{R},\Delta_{R}\in\mathbb{R}$ and $\mu_{P},\Delta_{P}\in\mathbb{R}^{3\times3}$.
For more information of orientation, pose, and SLAM filters with prescribed
performance visit \cite{hashim2019SO3Wiley,hashim2020TITS_SLAM}.

\subsection{Nonlinear Navigation Filter with Bias Compensation\label{subsec:Non-Nav-Observer2}}

In this Subsection the unknown bias inevitably present in measurements
of angular velocity and acceleration is accounted for. From \eqref{eq:NAV_XVelcoity},
let $\hat{b}_{\Omega}$ and $\hat{b}_{a}$ denote the estimates of
$b_{\Omega}$ and $b_{a}$, respectively. Define the bias error as
\begin{equation}
\begin{cases}
\tilde{b}_{\Omega} & =b_{\Omega}-\hat{b}_{\Omega}\\
\tilde{b}_{a} & =\hat{b}_{a}-b_{a}
\end{cases}\label{eq:NAV_bias_err}
\end{equation}
Consider the following nonlinear filter design on the Lie Group of
$\mathbb{SE}_{2}\left(3\right)$:{\small{}
	\begin{equation}
	\begin{cases}
	\dot{\hat{R}} & =\hat{R}[\Omega_{m}-\hat{b}_{\Omega}]_{\times}-\left[w_{\Omega}\right]_{\times}\hat{R}\\
	\dot{\hat{P}} & =\hat{V}-\left[w_{\Omega}\right]_{\times}\hat{P}-w_{V}\\
	\dot{\hat{V}} & =\hat{R}(a_{m}-\hat{b}_{a})-\left[w_{\Omega}\right]_{\times}\hat{V}-w_{a}
	\end{cases},\hspace{0.5em}\underbrace{\dot{\hat{X}}=\hat{X}U_{m}-W\hat{X}}_{\text{Compact form}}\label{eq:NAV_Est_Dot-2}
	\end{equation}
}such that
\begin{equation}
\begin{cases}
w_{\Omega} & =-k_{w}(E_{R}\Delta_{R}+1)\boldsymbol{\Upsilon}(M\tilde{R})\\
w_{V} & =\left[p_{c}-\tilde{R}^{\top}\tilde{P}_{\varepsilon}\right]_{\times}w_{\Omega}-\ell_{P}\tilde{R}^{\top}\tilde{P}_{\varepsilon}-k_{v}\Delta_{P}E_{P}\\
w_{a} & =-\overrightarrow{\mathtt{g}}+k_{a}\left(\delta\left[w_{\Omega}\right]_{\times}-\Delta_{P}\right)E_{P}\\
\dot{\hat{b}}_{\Omega} & =-\gamma_{b}(\Delta_{R}E_{R}+1)\hat{R}^{\top}\boldsymbol{\Upsilon}(M\tilde{R})\\
\dot{\hat{b}}_{a} & =-\gamma_{a}\delta\hat{R}^{\top}E_{P}
\end{cases}\label{eq:NAV_Filter2_Detailed}
\end{equation}
with $k_{w}$, $k_{v}$, $k_{a}$, $\delta$, $\ell_{P}$, $\gamma_{b}$,
and $\gamma_{a}$ being positive constants, $w_{\Omega}$, $w_{V}$,
and $w_{a}$ denoting correction factors $\forall$ $w_{\Omega},w_{V},w_{a}\in\mathbb{R}^{3}$,
$\boldsymbol{\Upsilon}(M\tilde{R})=\mathbf{vex}(\boldsymbol{\mathcal{P}}_{a}(M\tilde{R}))$,
$E_{R}=E_{1}$, and $E_{P}=[E_{2},E_{3},E_{4}]^{\top}$. Also, $U_{m}=u([\Omega_{m}-\hat{b}_{\Omega}\text{\ensuremath{]_{\times}}},0_{3\times1},a_{m}-\hat{b}_{a},1)\in\mathcal{U}_{\mathcal{M}}$
and $W=u([w_{\Omega}\text{\ensuremath{]_{\times}}},w_{V},w_{a},1)]\in\mathcal{U}_{\mathcal{M}}$.
\begin{thm}
	\label{thm:Theorem2}Consider the navigation dynamics in \eqref{eq:NAV_True_dot}
	coupled with feature measurements (output $\overline{y}_{i}=X^{-1}\overline{p}_{i}$)
	for all $i=1,2,\ldots,n$, angular velocity measurements ($\Omega_{m}=\Omega+b_{\Omega}$),
	and acceleration measurements ($a_{m}=a+b_{a}$). Let Assumption \ref{Assum:NAV_1Landmark}
	hold. Combine the nonlinear filter design in \eqref{eq:NAV_Est_Dot-2}
	and \eqref{eq:NAV_Filter2_Detailed} with the measurements of $\overline{y}_{i}$,
	$\Omega_{m}$, and $a_{m}$. Let the design parameters $k_{w}$, $k_{v}$,
	$k_{a}$, $\ell_{P}$, $\gamma_{b}$, $\gamma_{a}$, $\delta$, $\underline{\delta}_{i}=\bar{\delta}_{i}>|e_{i}(0)|$,
	$\xi_{i}^{0}>|e_{i}(0)|$ and $\xi_{i}^{\infty}$ be selected as positive
	constants. Define the set
	\begin{align}
	\mathcal{S}= & \{\left.(E_{R},E_{P},\tilde{V},\tilde{b}_{\Omega},\tilde{b}_{a})\in\mathbb{R}\times\mathbb{R}^{3}\times\mathbb{R}^{3}\times\mathbb{R}^{3}\times\mathbb{R}^{3}\right|\nonumber \\
	& \hspace{5em}E_{R}=0,E_{P}=\tilde{V}=\tilde{b}_{\Omega}=\tilde{b}_{a}=0_{3\times1}\}\label{eq:NAV_Set2}
	\end{align}
	given that $E_{R},E_{P}\in\mathcal{L}_{\infty}$ and $\tilde{R}(0)$
	does not belong to the unattractive set defined in \cite{hashim2019SO3Wiley}.
	Then, 1) the errors $E_{R}$, $E_{P}$, $\tilde{V}$, $\tilde{b}_{\Omega}$,
	and $\tilde{b}_{a}$ converge asymptotically to $\mathcal{S}$ in
	\eqref{eq:NAV_Set2}, 2) the trajectory of $\tilde{R}$ converges
	asymptotically to $\mathbf{I}_{3}$, and 3) the trajectories of $\tilde{P}$
	converge asymptotically to the origin.
\end{thm}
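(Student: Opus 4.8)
\emph{Proof outline.} The plan is to run a Lyapunov argument on the composite error state $(E_R,E_P,\tilde V,\tilde b_\Omega,\tilde b_a)$. First I would compute the closed-loop error dynamics. Differentiating $\tilde X=X\hat X^{-1}$ in \eqref{eq:NAV_X_error} and using the compact forms in \eqref{eq:NAV_True_dot} and \eqref{eq:NAV_Est_Dot-2} gives $\dot{\tilde X}=X(U-U_m)\hat X^{-1}-\mathcal{G}\tilde X+\tilde X W$ with $U-U_m=u(-[\tilde b_\Omega]_\times,0_{3\times1},\tilde b_a,0)$ carrying the bias errors \eqref{eq:NAV_bias_err}. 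Reading off the blocks (and using $\hat R[x]_\times\hat R^\top=[\hat R x]_\times$) yields $\dot{\tilde R}=\tilde R[\,w_\Omega-\hat R\tilde b_\Omega\,]_\times$, an affine equation for $\dot{\tilde P}$ (hence for $\dot{\tilde P}_\varepsilon$) driven by $\tilde V$, by $\tilde b_\Omega$, and by $w_V$, and an affine equation for $\dot{\tilde V}$ driven by $(\mathbf{I}_3-\tilde R)\overrightarrow{\mathtt{g}}$, by $\tilde b_\Omega$, $\tilde b_a$, and by $w_a$. Pushing these through the prescribed-performance transformation \eqref{eq:NAV_e_Trans}--\eqref{eq:SE3PPF_Trans_dot} produces $\dot E_R,\dot E_P$; in particular, using ${\rm Tr}\{A[v]_\times\}=-2\boldsymbol{\Upsilon}(A)^\top v$ one gets $\dot e_1=\tfrac12\boldsymbol{\Upsilon}(M\tilde R)^\top(w_\Omega-\hat R\tilde b_\Omega)$.

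Next I would take the Lyapunov candidate $L=\tfrac12 E_R^2+\tfrac12\|E_P\|^2+\tfrac{\delta}{2}\|\tilde V\|^2+\tfrac{1}{2\gamma_b}\|\tilde b_\Omega\|^2+\tfrac{1}{2\gamma_a}\|\tilde b_a\|^2$ (with weights tuned to balance the position/velocity cascade). Differentiating and substituting $w_\Omega$, $w_V$, $w_a$ and the adaptation laws from \eqref{eq:NAV_Filter2_Detailed}, the ``$+1$'' in $w_\Omega$ and in $\dot{\hat b}_\Omega$ is exactly what makes the $\tilde b_\Omega$ cross-terms telescope; the $\ell_P$- and $k_v$-terms in $w_V$ render the position channel strictly dissipative; and the $\delta[w_\Omega]_\times-\Delta_P$ structure of $w_a$ together with $\dot{\hat b}_a=-\gamma_a\delta\hat R^\top E_P$ cancels the $\tilde b_a$-coupling and the $(\mathbf{I}_3-\tilde R)\overrightarrow{\mathtt{g}}$ term against the $\tilde V$-channel. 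Invoking Lemma \ref{Lemm:SLAM_Lemma1} to trade $\|\boldsymbol{\Upsilon}(M\tilde R)\|^2$ for $(1+{\rm Tr}\{\tilde R\})\,||M\tilde R||_{\rm I}$ and using positivity of all gains, this collapses to $\dot L\le -c_R E_R^2-c_P\|E_P\|^2-c_V\|\tilde V\|^2+\varrho(t)$ with $c_R,c_P,c_V>0$ and $\varrho(t)\ge 0$ an exponentially decaying residual produced by $\dot\xi_i/\xi_i\to0$; hence $L\in\mathcal{L}_\infty$ and all error states are bounded. Boundedness of $E_R,E_P$ in turn keeps $e_i/\xi_i$ away from $\pm\bar\delta_i$, so $\Delta_R,\Delta_P$ stay finite and positive — this is precisely the guaranteed transient/steady-state bound on $e=[\,||M\tilde R||_{\rm I},\,\tilde P_\varepsilon^\top\tilde R\,]^\top$ in \eqref{eq:NAV_Vec_error}.

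With all signals bounded, $\dot E_R,\dot E_P,\dot{\tilde V}$ are bounded, so $\ddot L$ is bounded and $\dot L$ is uniformly continuous; Barbalat's lemma then gives $\dot L\to0$, whence $E_R\to0$, $E_P\to0$, $\tilde V\to0$. From $E_R\to0$ with $\underline\delta_1=\bar\delta_1$ I get $||M\tilde R||_{\rm I}\to0$, and since Assumption \ref{Assum:NAV_1Landmark} makes $\overline{\mathbf{M}}$ have at least two strictly positive eigenvalues and $\tilde R(0)$ avoids the unstable set, the lower bound in Lemma \ref{Lemm:SLAM_Lemma1} forces $\tilde R\to\mathbf{I}_3$, which is claim 2). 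From $E_P\to0$ I get $\tilde R^\top\tilde P_\varepsilon\to0$, hence $\tilde P_\varepsilon\to0$, and since $\tilde R\to\mathbf{I}_3$ gives $\tilde P_\varepsilon=\tilde P-(\mathbf{I}_3-\tilde R)p_c\to\tilde P$, I conclude $\tilde P\to0_{3\times1}$, which is claim 3). Finally, because $\xi_i(t)\to\xi_i^\infty$ and $\dot\xi_i(t)\to0$ exponentially the loop is asymptotically autonomous, so an invariance-principle argument applies to the limiting dynamics: on $\{E_R=0,E_P=0,\tilde V=0,\tilde R=\mathbf{I}_3\}$ one has $\boldsymbol{\Upsilon}(M\tilde R)=0$, $w_\Omega=w_V=0$ and $w_a=-\overrightarrow{\mathtt{g}}$, so the limiting $\dot{\tilde R}$ equation forces $\hat R\tilde b_\Omega\equiv0$ and the limiting $\dot{\tilde V}$ equation forces $R\tilde b_a\equiv0$, i.e. $\tilde b_\Omega=\tilde b_a=0_{3\times1}$; hence the full state converges to $\mathcal{S}$ in \eqref{eq:NAV_Set2}, which is claim 1).

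The main obstacle will be the sign analysis of $\dot L$: one must verify that the position-error/velocity-error cascade, the attitude--gravity coupling $(\mathbf{I}_3-\tilde R)\overrightarrow{\mathtt{g}}$, and the two bias channels all cancel or are dominated under the specific choices of $w_V$, $w_a$, $\dot{\hat b}_\Omega$, $\dot{\hat b}_a$ — i.e. that the gains $\delta,k_a,k_v,\ell_P,\gamma_a,\gamma_b$ appear consistently so every indefinite cross-term telescopes and only the negative-definite terms (plus the decaying residual) survive. Secondary care is needed to certify that $e_i/\xi_i$ never reaches the funnel boundary, so that $\Delta_R,\Delta_P$ do not blow up, and to make the asymptotically-autonomous invariance step rigorous, since $\dot{\tilde b}_\Omega,\dot{\tilde b}_a\to0$ does not by itself give $\tilde b_\Omega,\tilde b_a\to0$.
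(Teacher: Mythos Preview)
Your overall architecture (compute closed-loop error dynamics, run a Lyapunov argument, then Barbalat/LaSalle) is sound, but the specific Lyapunov candidate you chose will not produce the inequality you claim. With the purely diagonal $L=\tfrac12E_R^2+\tfrac12\|E_P\|^2+\tfrac{\delta}{2}\|\tilde V\|^2+\tfrac{1}{2\gamma_b}\|\tilde b_\Omega\|^2+\tfrac{1}{2\gamma_a}\|\tilde b_a\|^2$, the $\tilde b_\Omega$-cross-term does \emph{not} telescope: differentiating $\tfrac12E_R^2+\tfrac{1}{2\gamma_b}\|\tilde b_\Omega\|^2$ and inserting $\dot{\hat b}_\Omega$ from \eqref{eq:NAV_Filter2_Detailed} leaves an uncancelled indefinite term $\boldsymbol{\Upsilon}(M\tilde R)^{\top}\hat R\tilde b_\Omega$ (the ``$+1$'' in the adaptation law has no partner in $E_R\dot E_R$). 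The paper fixes this by adding $2\,||M\tilde R||_{\rm I}$ to the rotational part of the Lyapunov function, whose derivative supplies exactly the missing $-\boldsymbol{\Upsilon}(M\tilde R)^{\top}\hat R\tilde b_\Omega$. Likewise, the gravity coupling does not cancel: after inserting $w_a$ one is left with $\tilde R^{\top}\overrightarrow{\mathtt g}-\overrightarrow{\mathtt g}$, which the paper does not cancel but rather bounds by $c_g\,\|\mathbf I_3-\tilde R\|_F$ and absorbs into the combined quadratic form; and the $\hat R\tilde b_\Omega$-terms that appear with state-dependent coefficients in $\tfrac{d}{dt}\tilde R^{\top}\tilde P_\varepsilon$ and $\tfrac{d}{dt}\tilde R^{\top}\tilde V$ must similarly be dominated, not cancelled.

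The paper therefore takes a genuinely different route from yours: instead of a diagonal $L$ plus Barbalat plus an asymptotically-autonomous invariance argument for the biases, it builds a \emph{non-diagonal} Lyapunov function $\mathcal L_T=\mathcal L_R+\mathcal L_{PV}$ that contains the cross-terms $\boldsymbol{\Upsilon}(M\tilde R)^{\top}\hat R\tilde b_\Omega$, $-\delta\,E_P^{\top}\tilde R^{\top}\tilde V$ and $\delta_a\,\tilde b_a^{\top}\hat R^{\top}\tilde R^{\top}\tilde V$. These cross-terms are what generate strict dissipation \emph{in the bias errors themselves}, so that $\dot{\mathcal L}_T\le-\underline\lambda(A_T)\|\varepsilon_T\|^2$ with $\varepsilon_T$ containing $\|\hat R\tilde b_\Omega\|$ and $\|\hat R\tilde b_a\|$; positivity of $\mathcal L_T$ and negativity of $\dot{\mathcal L}_T$ are then established through a chain of $2\times2$ and $3\times3$ Schur-type conditions on the gains (\eqref{eq:NAV_Filter2_Select1}, \eqref{eq:NAV_2Select1}, \eqref{eq:NAV_2Select2}, \eqref{eq:NAV_2select_4}, \eqref{eq:NAV_2select_5}). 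This buys a one-step asymptotic conclusion for all of $(E_R,E_P,\tilde V,\tilde b_\Omega,\tilde b_a)$ with no appeal to Barbalat or to invariance for the bias channels. Your final paragraph already anticipates exactly these obstacles; the resolution is not finer bookkeeping on the diagonal $L$, but augmenting $L$ with $||M\tilde R||_{\rm I}$ and with the three cross-terms above.
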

\begin{proof}From \eqref{eq:NAV_True_dot} and \eqref{eq:NAV_Est_Dot-2},
	the orientation error dynamics are
	\begin{align}
	\dot{\tilde{R}} & =-\tilde{R}[\hat{R}(\Omega_{m}-\Omega)]_{\times}+\tilde{R}[w_{\Omega}\text{\ensuremath{]_{\times}}}\label{eq:NAV_R_error_Dot}
	\end{align}
	where $[\hat{R}\Omega]_{\times}=\hat{R}\left[\Omega\right]_{\times}\hat{R}^{\top}$.
	From \eqref{eq:NAV_R_error_Dot}, \eqref{eq:NAV_True_dot}, and \eqref{eq:NAV_Est_Dot-2},
	the position error dynamics are
	\begin{align}
	\dot{\tilde{P}} & =\tilde{V}-\tilde{R}[\hat{P}]_{\times}\hat{R}(\Omega_{m}-\Omega)+\tilde{R}w_{V}\label{eq:NAV_P_error_Dot}
	\end{align}
	From \eqref{eq:NAV_R_error_Dot}, \eqref{eq:NAV_True_dot}, and \eqref{eq:NAV_Est_Dot-2},
	the velocity error dynamics are
	\begin{align}
	\dot{\tilde{V}} & =-R(a_{m}-a)-\tilde{R}[\hat{V}]_{\times}\hat{R}(\Omega_{m}-\Omega)+\overrightarrow{\mathtt{g}}+\tilde{R}w_{a}\label{eq:NAV_V_error_Dot}
	\end{align}
	Given that $\Omega_{m}=\Omega+b_{\Omega}$ and $a_{m}=a+b_{a}$, one
	has
	
	\begin{equation}
	\begin{cases}
	\frac{d}{dt}||M\tilde{R}||_{{\rm I}} & =-\frac{1}{2}\boldsymbol{\Upsilon}(M\tilde{R})^{\top}(\hat{R}\tilde{b}_{\Omega}-w_{\Omega})\\
	\frac{d}{dt}\tilde{R}^{\top}\tilde{P}_{\varepsilon} & =\tilde{R}^{\top}\tilde{V}-\left[\hat{P}-p_{c}+\tilde{R}^{\top}\tilde{P}_{\varepsilon}\right]_{\times}\hat{R}\tilde{b}_{\Omega}\\
	& \hspace{1em}-\left[p_{c}-\tilde{R}^{\top}\tilde{P}_{\varepsilon}\right]_{\times}w_{\Omega}+w_{V}\\
	\frac{d}{dt}\tilde{R}^{\top}\tilde{V} & =-\left[\tilde{R}^{\top}V\right]_{\times}\hat{R}\tilde{b}_{\Omega}-\left[w_{\Omega}\right]_{\times}\tilde{R}^{\top}\tilde{V}\\
	& \hspace{1em}+\hat{R}\tilde{b}_{a}+\tilde{R}^{\top}\overrightarrow{\mathtt{g}}+w_{a}
	\end{cases}\label{eq:NAV_Filter2_Error_dot}
	\end{equation}
	From \eqref{eq:NAV_Filter2_Error_dot} and \eqref{eq:SE3PPF_Trans_dot},
	one obtains
	\begin{equation}
	\begin{cases}
	\dot{E}_{R} & =\Delta_{R}\left(\frac{d}{dt}||M\tilde{R}||_{{\rm I}}-\mu_{R}||M\tilde{R}||_{{\rm I}}\right)\\
	\dot{E}_{P} & =\Delta_{P}\left(\frac{d}{dt}\tilde{R}^{\top}\tilde{P}_{\varepsilon}-\mu_{P}\tilde{R}^{\top}\tilde{P}_{\varepsilon}\right)
	\end{cases}\label{eq:NAV_Filter2_Trans_dot}
	\end{equation}
	Consider selecting the following Lyapunov function candidate $\mathcal{L}_{T}=\mathcal{L}_{T}(||M\tilde{R}||_{{\rm I}},E_{R},E_{P},\tilde{R}^{\top}\tilde{V},\hat{R}\tilde{b}_{\Omega},\hat{R}\tilde{b}_{a})$
	\begin{equation}
	\mathcal{L}_{T}=\mathcal{L}_{R}+\mathcal{L}_{PV}\label{eq:NAV_LyapT_2}
	\end{equation}
	Let $\overline{\lambda}_{\overline{\mathbf{M}}}=\overline{\lambda}(\overline{\mathbf{M}})$.
	The function $\mathcal{L}_{R}$ is selected as
	\begin{align}
	\mathcal{L}_{R} & =E_{R}^{2}+\mathcal{L}_{1}\label{eq:NAV_LyapR_2}
	\end{align}
	where $\mathcal{L}_{Rb_{\Omega}}=2||M\tilde{R}||_{{\rm I}}+\frac{1}{2\gamma_{b}}||\tilde{b}_{\Omega}||^{2}+\frac{1}{2\overline{\gamma}_{b}\overline{\lambda}_{\overline{\mathbf{M}}}}\boldsymbol{\Upsilon}(M\tilde{R})^{\top}\hat{R}\tilde{b}_{\Omega}$.
	$\mathcal{L}_{Rb_{\Omega}}$ in \eqref{eq:NAV_LyapR_2} follows
	\[
	\varepsilon_{1}^{\top}\underbrace{\left[\begin{array}{cc}
		2 & \frac{-1}{4\overline{\gamma}_{b}\overline{\lambda}_{\overline{\mathbf{M}}}}\\
		\frac{-1}{4\overline{\gamma}_{b}\overline{\lambda}_{\overline{\mathbf{M}}}} & \frac{1}{2\gamma_{b}}
		\end{array}\right]}_{\underline{A}_{1}}\varepsilon_{1}\leq\mathcal{L}_{1}\leq\varepsilon_{1}^{\top}\underbrace{\left[\begin{array}{cc}
		2 & \frac{1}{4\overline{\gamma}_{b}\overline{\lambda}_{\overline{\mathbf{M}}}}\\
		\frac{1}{4\overline{\gamma}_{b}\overline{\lambda}_{\overline{\mathbf{M}}}} & \frac{1}{2\gamma_{b}}
		\end{array}\right]}_{\overline{A}_{1}}\varepsilon_{1}
	\]
	with $\varepsilon_{1}=\left[\sqrt{||M\tilde{R}||_{{\rm I}}},||\hat{R}\tilde{b}_{\Omega}||\right]^{\top}$
	and $\sqrt{2\overline{\lambda}_{\overline{\mathbf{M}}}||\tilde{R}M||_{{\rm I}}}\geq||\boldsymbol{\Upsilon}(M\tilde{R})||$
	as defined in \eqref{eq:SLAM_lemm1}, Lemma \ref{Lemm:SLAM_Lemma1}.
	Both $\underline{A}_{1}$ and $\overline{A}_{1}$ become positive
	given that
	\begin{equation}
	\frac{1}{\gamma_{b}}>\frac{1}{16\overline{\gamma}_{b}^{2}\overline{\lambda}_{\overline{\mathbf{M}}}^{2}}\label{eq:NAV_Filter2_Select1}
	\end{equation}
	Let $\gamma_{b}$ be selected as in \eqref{eq:NAV_Filter2_Select1}.
	Recall \eqref{eq:NAV_Filter2_Trans_dot} and \eqref{eq:NAV_Filter2_Error_dot}.
	The derivative of \eqref{eq:NAV_LyapR_2} is
	\begin{align*}
	\dot{\mathcal{L}}_{R} & \leq-c_{R}(\Delta_{R}^{2}E_{R}^{2}+1)||M\tilde{R}||_{{\rm I}}-c_{b}||\hat{R}\tilde{b}_{\Omega}||^{2}\\
	& +\frac{c_{\Omega}}{\overline{\gamma}_{b}\sqrt{2\overline{\lambda}_{\overline{\mathbf{M}}}}}\sqrt{||M\tilde{R}||_{{\rm I}}}\,||\hat{R}\tilde{b}_{\Omega}||
	\end{align*}
	which can be expressed as
	\begin{align}
	\dot{\mathcal{L}}_{R}\leq & -\varepsilon_{1}^{\top}\underbrace{\left[\begin{array}{cc}
		c_{R} & \frac{c_{\Omega}}{\overline{\gamma}_{b}\sqrt{2\overline{\lambda}_{\overline{\mathbf{M}}}}}\\
		\frac{c_{\Omega}}{\overline{\gamma}_{b}\sqrt{2\overline{\lambda}_{\overline{\mathbf{M}}}}} & c_{b}
		\end{array}\right]}_{A_{2}}\varepsilon_{1}\nonumber \\
	& -c_{R}E_{R}^{2}\Delta_{R}^{2}||M\tilde{R}||_{{\rm I}}\label{eq:NAV_LyapR_2dot_3}
	\end{align}
	where $\varepsilon_{1}=\left[\sqrt{||M\tilde{R}||_{{\rm I}}},||\hat{R}\tilde{b}_{\Omega}||\right]^{\top}$.
	It can be deduced that $A_{2}$ become positive by selecting $c_{R}c_{b}\geq\frac{c_{\Omega}^{2}}{2\overline{\gamma}_{b}^{2}\overline{\lambda}_{\overline{\mathbf{M}}}}$
	such that
	\begin{align}
	\overline{\gamma}_{b} & \geq\sqrt{\frac{8c_{\Omega}^{2}(1+{\rm Tr}\{\tilde{R}\})+K_{M}}{2\sqrt{3}(k_{w}-1)\underline{\lambda}_{\overline{\mathbf{M}}}\overline{\lambda}_{\overline{\mathbf{M}}}}}\label{eq:NAV_2Select1}
	\end{align}
	where $K_{M}=(\sqrt{3}\gamma_{b}+(2(k_{w}-1)\overline{\lambda}_{\overline{\mathbf{M}}}-\gamma_{b})k_{w}^{2})\underline{\lambda}_{\overline{\mathbf{M}}}$.
	Consider selecting $\overline{\gamma}_{b}$ as in \eqref{eq:NAV_2Select1}
	and defining $\underline{\lambda}_{A_{2}}=\underline{\lambda}(A_{2})$.
	Thus, $\dot{\mathcal{L}}_{R}$ in \eqref{eq:NAV_LyapR_2dot_3} follows
	the inequality below
	\begin{align}
	\dot{\mathcal{L}}_{R}\leq-\underline{\lambda}_{A_{2}}||\varepsilon_{R}||^{2} & -c_{R}E_{R}^{2}\Delta_{R}^{2}||M\tilde{R}||_{{\rm I}}\label{eq:NAV_LyapR_2dot_Final}
	\end{align}
	Let us turn our attention to the second portion of the $\mathcal{L}_{T}$
	definition in \eqref{eq:NAV_LyapT_2}. Define the following Lyapunov
	function candidate:
	\begin{align}
	\mathcal{L}_{PV} & =\frac{1}{2}||E_{P}||^{2}+\frac{1}{2k_{a}}||\tilde{V}^{\top}\tilde{R}||^{2}+\frac{1}{2\gamma_{a}}\left\Vert \hat{R}\tilde{b}_{a}\right\Vert ^{2}\nonumber \\
	& -\delta E_{P}^{\top}\tilde{R}^{\top}\tilde{V}+\delta_{a}\tilde{b}_{a}^{\top}\hat{R}^{\top}\tilde{R}^{\top}\tilde{V}\label{eq:NAV_LyapPV_2}
	\end{align}
	It is straightforward to show that $\mathcal{L}_{PV}$ in \eqref{eq:NAV_LyapPV_2}
	follows{\small{}
		\[
		\varepsilon_{2}^{\top}\underbrace{\left[\begin{array}{ccc}
			\frac{1}{2} & \frac{-\delta}{2} & 0\\
			\frac{-\delta}{2} & \frac{1}{2k_{a}} & \frac{-\delta_{a}}{2}\\
			0 & \frac{-\delta_{a}}{2} & \frac{1}{2\gamma_{a}}
			\end{array}\right]}_{\underline{A}_{3}}\varepsilon_{2}\leq\mathcal{L}_{PV}\leq\varepsilon_{2}^{\top}\underbrace{\left[\begin{array}{ccc}
			\frac{1}{2} & \frac{\delta}{2} & 0\\
			\frac{\delta}{2} & \frac{1}{2k_{a}} & \frac{\delta_{a}}{2}\\
			0 & \frac{\delta_{a}}{2} & \frac{1}{2\gamma_{a}}
			\end{array}\right]}_{\overline{A}_{3}}\varepsilon_{2}
		\]
	}with $\varepsilon_{2}=\left[||E_{P}||,||\tilde{V}^{\top}\tilde{R}||,||\hat{R}\tilde{b}_{a}||\right]^{\top}$.
	It becomes apparent that $\text{\ensuremath{\underline{A}}}_{3}$,
	$\overline{A}_{3}$, and in turn $\mathcal{L}_{PV}$ can be made positive,
	if the following holds $\frac{1}{8\gamma_{a}}(\frac{1}{k_{a}}-\delta^{2})-\frac{\delta_{a}^{2}}{8}>0$
	which can be enabled by selecting $\delta,\delta_{a},k_{a},\gamma_{a}>0$
	and 
	\begin{align}
	k_{a} & <\frac{1}{\delta_{a}^{2}\gamma_{a}+\delta^{2}}\label{eq:NAV_2Select2}
	\end{align}
	Let $k_{a}$ be selected as in \eqref{eq:NAV_2Select2}. Recall \eqref{eq:NAV_Filter2_Trans_dot}
	and \eqref{eq:NAV_Filter2_Error_dot}. The time derivative of $\mathcal{L}_{PV}$
	in \eqref{eq:NAV_LyapPV_2} is as follows:
	\begin{align}
	\dot{\mathcal{L}}_{PV}\leq & -\varepsilon_{PV}^{\top}\underbrace{\left[\begin{array}{cc}
		k_{v}c_{1}-\delta k_{a}c_{2} & \frac{\delta k_{v}c_{4}}{2}\\
		\frac{\delta k_{v}c_{4}}{2} & \delta c_{3}
		\end{array}\right]}_{A_{4}}\varepsilon_{PV}\nonumber \\
	& -\delta_{a}||\hat{R}\tilde{b}_{a}||^{2}+(\delta_{a}c_{a}||\hat{R}\tilde{b}_{a}||+c_{b}||\hat{R}\tilde{b}_{\Omega}||)||\tilde{R}^{\top}\tilde{V}||\nonumber \\
	& +c_{g}(||\tilde{V}||+||E_{P}||+||\hat{R}\tilde{b}_{a}||)||\mathbf{I}_{3}-\tilde{R}||_{F}\nonumber \\
	& +c_{b}||\hat{R}\tilde{b}_{\Omega}||\,||E_{P}||+\delta_{a}c_{a}||\hat{R}\tilde{b}_{a}||\,||E_{P}||\label{eq:NAV_LyapPV_2dot_5}
	\end{align}
	where $\varepsilon_{PV}=\left[||E_{P}||,||\tilde{V}^{\top}\tilde{R}||\right]^{\top}$.
	It becomes clear that $A_{4}$ in \eqref{eq:NAV_LyapPV_2dot_5} can
	be made positive by selecting
	\begin{equation}
	\frac{1}{\delta}>\frac{k_{v}^{2}c_{4}^{2}+4k_{a}c_{3}c_{2}}{4c_{3}c_{1}k_{v}}\label{eq:NAV_2select_4}
	\end{equation}
	Let $\delta$ be selected as in \eqref{eq:NAV_2select_4} and define
	$\underline{\lambda}_{A_{4}}=\underline{\lambda}(A_{4})$. One may
	rewrite the inequality in \eqref{eq:NAV_LyapPV_2dot_5} as follows:
	\begin{align}
	\dot{\mathcal{L}}_{PV}\leq & -\left[\begin{array}{c}
	||\varepsilon_{PV}||\\
	||\hat{R}\tilde{b}_{a}||
	\end{array}\right]^{\top}\underbrace{\left[\begin{array}{cc}
		\underline{\lambda}_{A_{4}} & \frac{\delta_{a}c_{a}}{2}\\
		\frac{\delta_{a}c_{a}}{2} & \delta_{a}
		\end{array}\right]}_{A_{5}}\left[\begin{array}{c}
	||\varepsilon_{PV}||\\
	||\hat{R}\tilde{b}_{a}||
	\end{array}\right]\nonumber \\
	& +c_{g}(||\tilde{V}||+||E_{P}||+||\hat{R}\tilde{b}_{a}||)||\mathbf{I}_{3}-\tilde{R}||_{F}\nonumber \\
	& +c_{b}||\hat{R}\tilde{b}_{\Omega}||\,||E_{P}||+c_{b}||\hat{R}\tilde{b}_{\Omega}||\,||\tilde{R}^{\top}\tilde{V}||\label{eq:NAV_LyapPV_2dot_6}
	\end{align}
	It can be shown that $A_{5}$ is positive if the following holds:
	\begin{equation}
	\delta_{a}<\frac{4\underline{\lambda}_{A_{4}}}{c_{a}^{2}}\label{eq:NAV_2select_5}
	\end{equation}
	Consider selecting $\delta_{a}$ as in \eqref{eq:NAV_2select_5} and
	defining $\underline{\lambda}_{A_{5}}=\underline{\lambda}(A_{5})$.
	Define $\varepsilon_{T}=\left[||\varepsilon_{Pb}||,\sqrt{||M\tilde{R}||_{{\rm I}}},||\hat{R}\tilde{b}_{\Omega}||\right]^{\top}$
	and let $c_{n}=\max\{4\overline{\lambda}_{M}c_{g},c_{b}\}$. Hence,
	one obtains
	\begin{align}
	\dot{\mathcal{L}}_{T}\leq & -\varepsilon_{T}^{\top}\underbrace{\left[\begin{array}{cc}
		\underline{\lambda}_{A_{5}} & \frac{c_{n}}{2}\mathbf{1}_{1\times2}\\
		\frac{c_{n}}{2}\mathbf{1}_{2\times1} & \underline{\lambda}_{A_{2}}\mathbf{I}_{2}
		\end{array}\right]}_{A_{T}}\varepsilon_{T}\nonumber \\
	& -\frac{\underline{\lambda}_{\overline{\mathbf{M}}}k_{w}E_{R}^{2}\Delta_{R}^{2}}{4(1+{\rm Tr}\{\tilde{R}\})}||M\tilde{R}||_{{\rm I}}\label{eq:NAV_LyapPV_2dot_8}
	\end{align}
	Select the parameters of $A_{2}$ and $A_{5}$ such that $4\underline{\lambda}_{A_{5}}\underline{\lambda}_{A_{2}}>c_{n}^{2}$
	guaranteeing the positive definiteness of $A_{T}$. Accordingly, the
	following result is obtained:
	\begin{equation}
	\dot{\mathcal{L}}_{T}\leq-\underline{\lambda}(A_{T})\,||\varepsilon_{T}||^{2}-\frac{\underline{\lambda}_{\overline{\mathbf{M}}}k_{w}E_{R}^{2}\Delta_{R}^{2}}{4(1+{\rm Tr}\{\tilde{R}\})}||M\tilde{R}||_{{\rm I}}\label{eq:NAV_LyapPV_2dot_Final}
	\end{equation}
	Recall that $\varepsilon_{T}=\left[||\varepsilon_{Pb}||,\sqrt{||M\tilde{R}||_{{\rm I}}},||\hat{R}\tilde{b}_{\Omega}||\right]$
	where $\varepsilon_{Pb}=\left[||\varepsilon_{PV}||,||\hat{R}\tilde{b}_{a}||\right]^{\top}$,
	and $\varepsilon_{PV}=\left[||E_{P}||,||\tilde{V}^{\top}\tilde{R}||\right]^{\top}$.
	In accordance with the definition of $\mathcal{L}_{T}$ in \eqref{eq:NAV_LyapT_2},
	the inequality in \eqref{eq:NAV_LyapPV_2dot_Final} indicates that
	$\dot{\mathcal{L}}_{T}$ is negative for all $\mathcal{L}_{T}>0$,
	and $\dot{\mathcal{L}}_{T}=0$ only at $\mathcal{L}_{T}=0$. This
	ensures asymptotic convergence of $\mathcal{L}_{T}$ enabling the
	guaranteed performance of transient and steady-state error in \eqref{eq:NAV_Vec_error}.
	This completes the proof.\end{proof}

For implementation purposes, the nonlinear navigation filter with
guaranteed performance for unknown bias outlined in \eqref{eq:NAV_Est_Dot-2}
and detailed in \eqref{eq:NAV_Filter2_Detailed} is presented in discrete
form. Let $\Delta t$ denote a small sample time step. The complete
discrete implementation steps are detailed in Algorithm \ref{alg:Alg_Disc}. 

\begin{algorithm}
	\caption{\label{alg:Alg_Disc}Discrete nonlinear filter.}
	
	\textbf{Initialization}:
	\begin{enumerate}
		\item[{\footnotesize{}1:}] Set $\hat{R}[0]\in\mathbb{SO}\left(3\right)$, $\hat{P}[0]\in\mathbb{R}^{3}$,
		$\hat{V}[0]\in\mathbb{R}^{3}$, and $\hat{b}_{\Omega}[0]=\hat{b}_{a}[0]=0_{3\times1}$
		\item[{\footnotesize{}2:}] Select $k_{w}$, $k_{v}$, $k_{a}$, $\ell_{P}$, $\gamma_{b}$,
		$\gamma_{a}$, $\delta$, $\underline{\delta}_{i}=\bar{\delta}_{i}>|e_{i}(0)|$,
		$\xi_{i}^{0}>|e_{i}(0)|$ and $\xi_{i}^{\infty}$ as positive constants,
		and the sample $k=0$
	\end{enumerate}
	\textbf{while }(1)\textbf{ do}
	\begin{enumerate}
		\item[] \textcolor{blue}{/{*} Prediction step {*}/}
		\item[{\footnotesize{}3:}] $\hat{X}_{k|k}=\left[\begin{array}{ccc}
		\hat{R}_{k|k} & \hat{P}_{k|k} & \hat{V}_{k|k}\\
		0_{1\times3} & 1 & 0\\
		0_{1\times3} & 0 & 1
		\end{array}\right]$ and $\hat{U}_{k}=\left[\begin{array}{ccc}
		\left[\Omega_{m}[k]-\hat{b}_{\Omega}[k]\right]_{\times} & 0_{3\times1} & a_{m}[k]-\hat{b}_{a}[k]\\
		0_{1\times3} & 0 & 0\\
		0_{1\times3} & 1 & 0
		\end{array}\right]$
		\item[{\footnotesize{}4:}] $\hat{X}_{k+1|k}=\hat{X}_{k|k}\exp(\hat{U}_{k}\Delta t)$
		\item[] \textcolor{blue}{/{*} Update step {*}/}
		\item[{\footnotesize{}5:}] $\begin{cases}
		p_{c} & =\frac{1}{s_{T}}\sum_{i=1}^{n}s_{i}p_{i}[k],\hspace{1em}s_{T}=\sum_{i=1}^{n}s_{i}\\
		M_{k} & =\sum_{i=1}^{n}s_{i}p_{i}[k]p_{i}^{\top}[k]-s_{T}p_{c}p_{c}^{\top}\\
		M\tilde{R}_{k} & =\sum_{i=1}^{n}s_{i}\left(p_{i}[k]-p_{c}\right)y_{i}^{\top}[k]\hat{R}_{k+1|k}^{\top}\\
		\tilde{R}^{\top}\tilde{P}_{\varepsilon}[k] & =\frac{\sum_{i=1}^{n}s_{i}(p_{i}[k]-\hat{R}_{k+1|k}y_{i}[k]-\hat{P}_{k+1|k})}{s_{T}}
		\end{cases}$
		\item[{\footnotesize{}6:}] $[e_{1}[k],e_{2}[k],e_{3}[k],e_{4}[k]]^{\top}=\left[||M\tilde{R}_{k}||_{{\rm I}},\tilde{P}_{\varepsilon}^{\top}\tilde{R}[k]\right]^{\top}\in\mathbb{R}^{4}$
		\item[{\footnotesize{}7:}] \textbf{for} $i=1:4$ \hspace{0.5cm}\textcolor{blue}{/{*} Guaranteed
			Performance{*}/}
		\item[{\footnotesize{}8:}] \hspace{0.5cm}$\xi_{i}[k]=(\xi_{0}-\xi_{\infty})\exp(-\ell k\Delta t)+\xi_{\infty}$
		\item[{\footnotesize{}9:}] \hspace{0.5cm}\textbf{if} $e_{i}[k]>\xi_{i}[k]$ \textbf{then}
		\item[{\footnotesize{}10:}] \hspace{0.5cm}\hspace{0.5cm}$\xi_{i}[k]=e_{i}[k]+\epsilon$,\hspace{0.5cm}
		\textcolor{blue}{/{*} $\epsilon$ is a small constant {*}/}
		\item[{\footnotesize{}11:}] \hspace{0.5cm}\textbf{end if}
		\item[{\footnotesize{}12:}] \hspace{0.5cm}$\begin{cases}
		E_{i} & =\frac{1}{2}\text{ln}\frac{\underline{\delta}_{i}+e_{i}[k]/\xi_{i}[k]}{\bar{\delta}_{i}-e_{i}[k]/\xi_{i}[k]}\\
		\Delta_{i} & =\frac{1}{2\xi_{i}[k]}\left(\frac{1}{\underline{\delta}_{i}+e_{i}[k]/\xi_{i}[k]}+\frac{1}{\bar{\delta}_{i}-e_{i}[k]/\xi_{i}[k]}\right)
		\end{cases}$
		\item[{\footnotesize{}13:}] \textbf{end for}
		\item[{\footnotesize{}14:}] Set $E_{R}[k]=E_{1}$, $E_{P}[k]=[E_{2},E_{3},E_{4}]^{\top}$, $\Delta_{R}[k]=\Delta_{1}$,
		and $\Delta_{P}[k]={\rm diag}(\Delta_{2},\Delta_{3},\Delta_{4})$.
		\item[{\footnotesize{}15:}] $\begin{cases}
		w_{\Omega}[k] & =-k_{w}(E_{R}[k]\Delta_{R}[k]+1)\boldsymbol{\Upsilon}(M\tilde{R}_{k})\\
		w_{V}[k] & =\left[p_{c}[k]-\tilde{R}^{\top}\tilde{P}_{\varepsilon}[k]\right]_{\times}w_{\Omega}[k]\\
		& \hspace{1em}-\ell_{P}\tilde{R}^{\top}\tilde{P}_{\varepsilon}[k]-k_{v}\Delta_{P}[k]E_{P}[k]\\
		w_{a}[k] & =-\overrightarrow{\mathtt{g}}+k_{a}\left(\delta\left[w_{\Omega}[k]\right]_{\times}-\Delta_{P}[k]\right)E_{P}[k]\\
		\hat{b}_{\Omega}[k+1] & =\hat{b}_{\Omega}[k]-\Delta t\gamma_{b}\hat{R}_{k+1|k}^{\top}\boldsymbol{\Upsilon}(M\tilde{R}_{k})\\
		& \hspace{1em}-\Delta t\gamma_{b}\Delta_{R}[k]E_{R}[k]\hat{R}_{k+1|k}^{\top}\boldsymbol{\Upsilon}(M\tilde{R}_{k})\\
		\hat{b}_{a}[k+1] & =\hat{b}_{a}[k]-\Delta t\gamma_{a}\delta\hat{R}_{k+1|k}^{\top}E_{P}[k]
		\end{cases}$
		\item[{\footnotesize{}16:}] $W_{k}=\left[\begin{array}{ccc}
		\left[w_{\Omega}[k]\right]_{\times} & w_{V}[k] & w_{a}[k]\\
		0_{1\times3} & 0 & 0\\
		0_{1\times3} & 1 & 0
		\end{array}\right]$
		\item[{\footnotesize{}17:}] $\hat{X}_{k+1|k+1}=\exp(-W_{k}\Delta t)\hat{X}_{k+1|k}$
		\item[{\footnotesize{}18:}] $k=k+1$
	\end{enumerate}
	\textbf{end while}
\end{algorithm}

\section{Experimental Results \label{sec:SE3_Simulations}}

In this Section, we test the proposed navigation filter against real-world
data obtained from the EuRoC dataset \cite{Burri2016Euroc}. The dataset
in \cite{Burri2016Euroc} is composed of ground truth that describes
the quadrotor true trajectory, uncertain measurements collected by
IMU, and stereo images. It should be noted that the stereo images
were recorded by MT9V034 at a sampling rate of 20Hz and the IMU measurements
were obtained at a sampling rate of 200 Hz using ADIS16448. The dataset
does not include values of real world features, as a result, a set
of virtual landmarks were assigned randomly in accordance with Assumption
\ref{Assum:NAV_1Landmark}. The camera parameters were calibrated
through a Stereo Camera Calibrator Application and the images are
not distorted, see Fig. \ref{fig:NAV_Camera}. More details can be
found about EuRoC dataset in \cite{Burri2016Euroc}. Consider the
initial value of orientation, position and linear velocity to be:
\begin{align*}
\hat{R}_{0}=\hat{R}\left(0\right) & =\mathbf{I}_{3},\hspace{1em}\hat{P}\left(0\right)=\hat{V}\left(0\right)=[0,0,0]^{\top}
\end{align*}
Let the initial bias estimate $\hat{b}_{\Omega}\left(0\right)=\hat{b}_{a}\left(0\right)=[0,0,0]^{\top}$
and the design parameters be defined as $k_{w}=3$, $k_{v}=4$, $k_{a}=\ell_{P}=4$,
$\gamma_{b}=2$, $\gamma_{a}=3$, $\delta=0.15$, $\ell=1.2[1,1,1,1]^{\top}$,
$\xi^{\infty}=[0.03,0.08,0.08,0.08]^{\top}$, and 
\[
\xi_{i}^{0}=\underline{\delta}=\bar{\delta}=\left[\begin{array}{c}
1.3||M\tilde{R}(0)||_{{\rm I}}\\
2\tilde{R}(0)^{\top}\tilde{P}_{\varepsilon}(0)
\end{array}\right]+2[0.25,1,1,1]^{\top}
\]
\begin{figure}[H]
	\centering{}\includegraphics[scale=0.2]{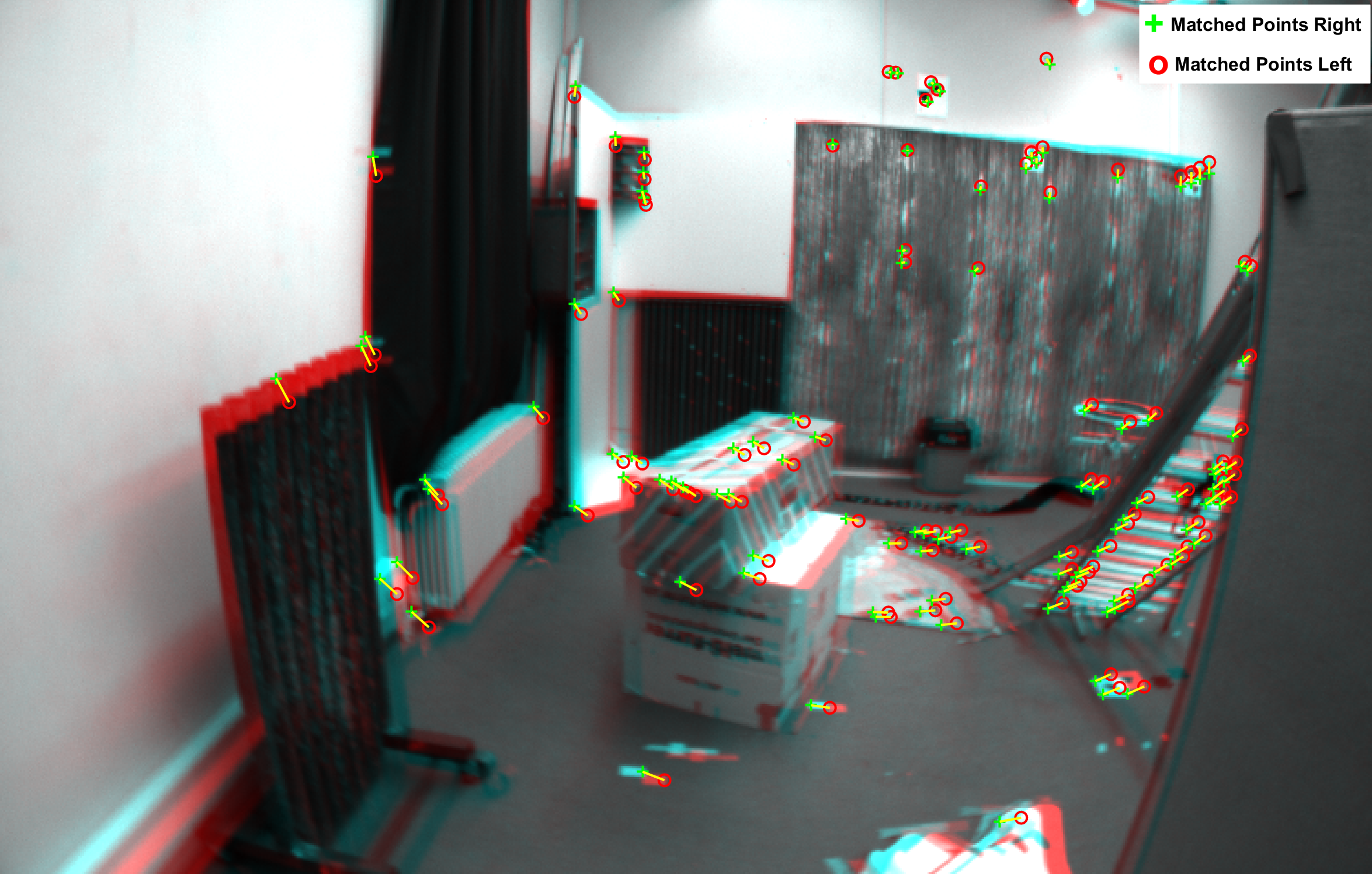}\caption{This image presents an example of right and left feature detection
		and tracking performed by a stereo camera via the Computer Vision
		System Toolbox with MATLAB R2020b. The annotated photograph is adopted
		from the EuRoC dataset \cite{Burri2016Euroc}.}
	\label{fig:NAV_Camera}
\end{figure}

Fig. \ref{fig:NAV_EXPERIMENTAL} shows the experimental performance
of the proposed filter in discrete form. The dataset that was used
in this Section is the Vicon Room 2 01 dataset \cite{Burri2016Euroc}.
Fig. \ref{fig:NAV_EXPERIMENTAL} presents robust convergence starting
from large error in initialization to the neighborhood of the origin.
This can be confirmed from the left portion of Fig. \ref{fig:NAV_EXPERIMENTAL}
which shows accurate tracking performance. Hence, the proposed solution
is promising to be implemented on a cheap electronic kit.

\begin{figure*}
	\centering{}\includegraphics[scale=0.35]{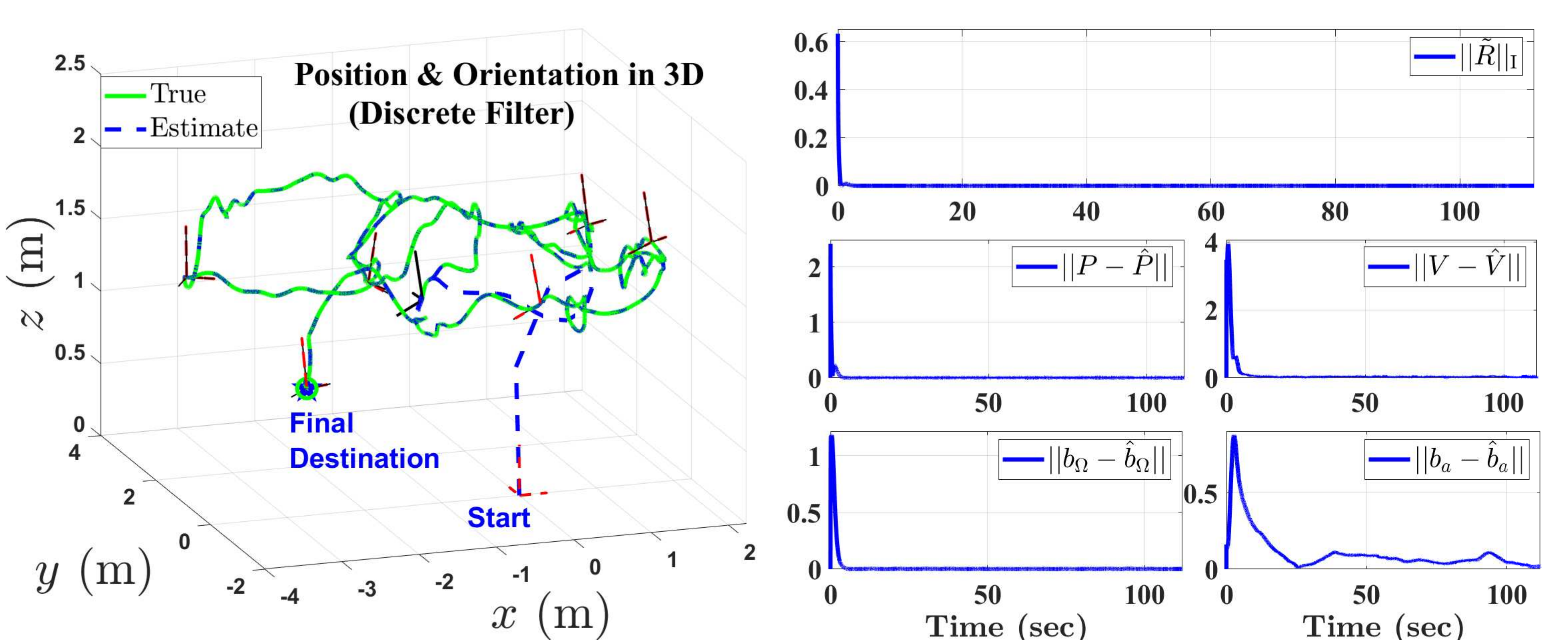}\caption{Dataset Vicon Room 2 01: True trajectory versus estimated trajectory
		plotted using the proposed nonlinear discrete navigation filter. In
		the right portion, the true trajectory is plotted in green solid-line
		while the estimated trajectory is plotted in blue dash-line. The true
		3-axes orientation is plotted in green solid-line while the estimated
		orientation is plotted in red dash-line. The final destination is
		marked with a star. The features are marked by black circles. In the
		right portion, a blue solid-line is used to demonstrate error components
		of: orientation $||R\hat{R}^{\top}||_{{\rm I}}$, position $||P-\hat{P}||$,
		velocity $||V-\hat{V}||$, bias in angular velocity $||b_{\Omega}-\hat{b}_{\Omega}||$,
		and bias in acceleration $||b_{a}-\hat{b}_{a}||$.}
	\label{fig:NAV_EXPERIMENTAL}
\end{figure*}

\section{Conclusion \label{sec:SE3_Conclusion}}

In this paper, the navigation problem of a vehicle traveling in 3D
space has been addressed in deterministic sense. The vehicle's orientation,
position, and linear velocity have been estimated successfully through
a proposed nonlinear filter on $\mathbb{SE}_{2}(3)$. The proposed
solution successfully accounts and compensates for the unknown bias
inevitably present in angular velocity and acceleration measurements.
According to the experimental results, the proposed filter showed
strong tracking capabilities of the unknown pose and estimation of
the unknown linear velocity of the vehicle.

\section*{Acknowledgment}

\subsection*{Appendix\label{subsec:Appendix-A}}
\begin{center}
	\textbf{Quaternion Representation of the Proposed Filter}
	\par\end{center}

\noindent The unit-quaternion is a four-element vector $Q=[q_{0},q_{1},q_{2},q_{3}]^{\top}=[q_{0},q^{\top}]^{\top}\in\mathbb{S}^{3}$
where $q_{0}\in\mathbb{R}$ and $q=[q_{1},q_{2},q_{3}]^{\top}\in\mathbb{R}^{3}$
such that
\[
\mathbb{S}^{3}=\{\left.Q\in\mathbb{R}^{4}\right|||Q||=\sqrt{q_{0}^{2}+q_{1}^{2}+q_{2}^{2}+q_{3}^{2}}=1\}
\]
The inverse of the rotation is represented by the inverse of a unit-quaternion
defined by $Q^{-1}=[\begin{array}{cc}
q_{0} & -q^{\top}\end{array}]^{\top}\in\mathbb{S}^{3}$. Let $\odot$ stand for the quaternion product between two unit-quaternions.
The quaternion multiplication between $Q_{1}=[\begin{array}{cc}
q_{01} & q_{1}^{\top}\end{array}]^{\top}\in\mathbb{S}^{3}$ and $Q_{2}=[\begin{array}{cc}
q_{02} & q_{2}^{\top}\end{array}]^{\top}\in\mathbb{S}^{3}$ is defined by
\begin{align*}
Q_{1}\odot Q_{2} & =\left[\begin{array}{c}
q_{01}q_{02}-q_{1}^{\top}q_{2}\\
q_{01}q_{2}+q_{02}q_{1}+\left[q_{1}\right]_{\times}q_{2}
\end{array}\right]
\end{align*}
For $\Omega\in\mathbb{R}^{3}$, define $\psi(\Omega)=\left[0,\Omega^{\top}\right]^{\top}\in\mathbb{R}^{4}$
and $\mathcal{T}(\psi(\Omega))=\Omega\in\mathbb{R}^{3}$. For more
details of Quaternion to/from $\mathbb{SO}(3)$ representation, visit
\cite{hashim2019AtiitudeSurvey}. Recall the measurements in \eqref{eq:NAV_Set_Measurements}
and define
\[
\begin{cases}
\tilde{y}_{i} & =p_{i}-\mathcal{T}\left(\hat{Q}\odot\psi(y_{i})\odot\hat{Q}^{-1}\right)-\hat{P}\\
\Phi_{q} & =M\tilde{R}=\sum_{i=1}^{n}s_{i}\left(p_{i}-p_{c}\right)y_{i}^{\top}\mathcal{R}_{\hat{Q}}^{\top}\\
{\rm v}_{q} & =\tilde{R}^{\top}\tilde{P}_{\varepsilon}=\frac{1}{s_{T}}\sum_{i=1}^{n}s_{i}\tilde{y}_{i}\\
||M\tilde{R}||_{{\rm I}} & =\frac{1}{4}{\rm Tr}\{M-\Phi_{q}\}
\end{cases}
\]
where $\mathcal{R}_{\hat{Q}}=(\hat{q}_{0}^{2}-||\hat{q}||^{2})\mathbf{I}_{3}+2\hat{q}\hat{q}^{\top}+2\hat{q}_{0}\left[\hat{q}\right]_{\times}$.
Thus, the error vector in \eqref{eq:NAV_Vec_error} is $e=\left[||M\tilde{R}||_{{\rm I}},{\rm v}_{q}^{\top}\right]^{\top}\in\mathbb{R}^{4}$.
The equivalent quaternion representation of the filter can be obtained
as follows:
\begin{equation}
\begin{cases}
\Gamma\left(\Omega\right) & =\left[\begin{array}{cc}
0 & -\Omega^{\top}\\
\Omega & -\left[\Omega\right]_{\times}
\end{array}\right],\hspace{1em}\Psi\left(\Omega\right)=\left[\begin{array}{cc}
0 & -\Omega^{\top}\\
\Omega & \left[\Omega\right]_{\times}
\end{array}\right]\\
\dot{\hat{Q}} & =\frac{1}{2}\Gamma\left(\Omega_{m}-\hat{b}_{\Omega}\right)\hat{Q}-\frac{1}{2}\Psi\left(w_{\Omega}\right)\hat{Q}\\
\dot{\hat{P}} & =\hat{V}-\left[w_{\Omega}\right]_{\times}\hat{P}-w_{V}\\
\dot{\hat{V}} & =\mathcal{T}\left(\hat{Q}\odot\psi(a_{m}-\hat{b}_{a})\odot\hat{Q}^{-1}\right)-\left[w_{\Omega}\right]_{\times}\hat{V}-w_{a}\\
w_{\Omega} & =-k_{w}(E_{R}\Delta_{R}+1)\boldsymbol{\Upsilon}(\Phi_{q})\\
w_{V} & =\left[p_{c}-{\rm v}_{q}\right]_{\times}w_{\Omega}-\ell_{P}{\rm v}_{q}-k_{v}\Delta_{P}E_{P}\\
w_{a} & =-\overrightarrow{\mathtt{g}}+k_{a}\left(\delta\left[w_{\Omega}\right]_{\times}-\Delta_{P}\right)E_{P}\\
\dot{\hat{b}}_{\Omega} & =-\gamma_{b}(\Delta_{R}E_{R}+1)\mathcal{T}\left(\hat{Q}^{-1}\odot\psi(\boldsymbol{\Upsilon}(\Phi_{q}))\odot\hat{Q}\right)\\
\dot{\hat{b}}_{a} & =-\gamma_{a}\delta\mathcal{T}\left(\hat{Q}^{-1}\odot\psi(E_{P})\odot\hat{Q}\right)
\end{cases}\label{eq:NAV_Quat_Filter2}
\end{equation}

\newpage
\bibliographystyle{IEEEtran}
\bibliography{bib_Navigation}

\begin{thebibliography}{10}
\providecommand{\url}[1]{#1}
\csname url@samestyle\endcsname
\providecommand{\newblock}{\relax}
\providecommand{\bibinfo}[2]{#2}
\providecommand{\BIBentrySTDinterwordspacing}{\spaceskip=0pt\relax}
\providecommand{\BIBentryALTinterwordstretchfactor}{4}
\providecommand{\BIBentryALTinterwordspacing}{\spaceskip=\fontdimen2\font plus
\BIBentryALTinterwordstretchfactor\fontdimen3\font minus
  \fontdimen4\font\relax}
\providecommand{\BIBforeignlanguage}[2]{{%
\expandafter\ifx\csname l@#1\endcsname\relax
\typeout{** WARNING: IEEEtran.bst: No hyphenation pattern has been}%
\typeout{** loaded for the language `#1'. Using the pattern for}%
\typeout{** the default language instead.}%
\else
\language=\csname l@#1\endcsname
\fi
#2}}
\providecommand{\BIBdecl}{\relax}
\BIBdecl

\bibitem{Hashim2021AESCTE}
H.~A. Hashim, ``A geometric nonlinear stochastic filter for simultaneous
  localization and mapping,'' \emph{Aerospace Science and Technology}, vol.
  111, p. 106569, 2021.

\bibitem{mourikis2007multi}
A.~I. Mourikis and S.~I. Roumeliotis, ``A multi-state constraint kalman filter
  for vision-aided inertial navigation,'' in \emph{Proceedings 2007 IEEE
  International Conference on Robotics and Automation}.\hskip 1em plus 0.5em
  minus 0.4em\relax IEEE, 2007, pp. 3565--3572.

\bibitem{hashim2021ACC}
H.~A. {Hashim}, ``Gps-denied navigation: Attitude, position, linear velocity,
  and gravity estimation with nonlinear stochastic observer,'' in \emph{2021
  American Control Conference (ACC)}.\hskip 1em plus 0.5em minus 0.4em\relax
  IEEE, 2021, pp. 1146--1151.

\bibitem{hashim2021_COMP_ENG_PRAC}
H.~A. Hashim, M.~Abouheaf, and M.~A. Abido, ``Geometric stochastic filter with
  guaranteed performance for autonomous navigation based on imu and feature
  sensor fusion,'' \emph{Control Engineering Practice}, vol.~PP, no.~PP, pp.
  1--11, 2021.

\bibitem{hashim2019SO3Wiley}
H.~A. Hashim, ``Systematic convergence of nonlinear stochastic estimators on
  the special orthogonal group {SO}(3),'' \emph{International Journal of Robust
  and Nonlinear Control}, vol.~30, no.~10, pp. 3848--3870, 2020.

\bibitem{hashim2018SO3Stochastic}
H.~A. Hashim, L.~J. Brown, and K.~McIsaac, ``Nonlinear stochastic attitude
  filters on the special orthogonal group 3: Ito and stratonovich,'' \emph{IEEE
  Transactions on Systems, Man, and Cybernetics: Systems}, vol.~49, no.~9, pp.
  1853--1865, 2019.

\bibitem{odry2021open}
{\'A}.~Odry, ``An open-source test environment for effective development of
  marg-based algorithms,'' \emph{Sensors}, vol.~21, no.~4, p. 1183, 2021.

\bibitem{jensen2011generalized}
K.~J. Jensen, ``Generalized nonlinear complementary attitude filter,''
  \emph{Journal of Guidance, Control, and Dynamics}, vol.~34, no.~5, pp.
  1588--1593, 2011.

\bibitem{odry2018kalman}
{\'A}.~Odry, R.~Fuller, I.~J. Rudas, and P.~Odry, ``Kalman filter for
  mobile-robot attitude estimation: Novel optimized and adaptive solutions,''
  \emph{Mechanical systems and signal processing}, vol. 110, pp. 569--589,
  2018.

\bibitem{hashim2020SE3Stochastic}
H.~A. Hashim and F.~L. Lewis, ``Nonlinear stochastic estimators on the special
  euclidean group {SE}(3) using uncertain imu and vision measurements,''
  \emph{IEEE Transactions on Systems, Man, and Cybernetics: Systems}, vol.~PP,
  no.~PP, pp. 1--14, 2020.

\bibitem{hashim2020TITS_SLAM}
H.~A. Hashim and A.~E.~E. Eltoukhy, ``Landmark and imu data fusion: Systematic
  convergence geometric nonlinear observer for slam and velocity bias,''
  \emph{IEEE Transactions on Intelligent Transportation Systems}, vol.~PP,
  no.~PP, pp. 1--10, 2020.

\bibitem{hashim2021T_SMCS_SLAM}
------, ``Nonlinear filter for simultaneous localization and mapping on a
  matrix lie group using imu and feature measurements,'' \emph{IEEE
  Transactions on Systems, Man, and Cybernetics: Systems}, vol.~PP, no.~PP, pp.
  1--12, 2021.

\bibitem{zhao2016analysis}
L.~Zhao, H.~Qiu, and Y.~Feng, ``Analysis of a robust kalman filter in loosely
  coupled gps/ins navigation system,'' \emph{Measurement}, vol.~80, pp.
  138--147, 2016.

\bibitem{hsu2017wearable}
Y.-L. Hsu, J.-S. Wang, and C.-W. Chang, ``A wearable inertial pedestrian
  navigation system with quaternion-based extended kalman filter for pedestrian
  localization,'' \emph{IEEE Sensors Journal}, vol.~17, no.~10, pp. 3193--3206,
  2017.

\bibitem{allotta2016unscented}
B.~Allotta, A.~Caiti, L.~Chisci, R.~Costanzi, F.~Di~Corato, C.~Fantacci,
  D.~Fenucci, E.~Meli, and A.~Ridolfi, ``An unscented kalman filter based
  navigation algorithm for autonomous underwater vehicles,''
  \emph{Mechatronics}, vol.~39, pp. 185--195, 2016.

\bibitem{blok2019robot}
P.~M. Blok, K.~van Boheemen, F.~K. van Evert, J.~IJsselmuiden, and G.-H. Kim,
  ``Robot navigation in orchards with localization based on particle filter and
  kalman filter,'' \emph{Computers and Electronics in Agriculture}, vol. 157,
  pp. 261--269, 2019.

\bibitem{barrau2016invariant}
A.~Barrau and S.~Bonnabel, ``The invariant extended kalman filter as a stable
  observer,'' \emph{IEEE Transactions on Automatic Control}, vol.~62, no.~4,
  pp. 1797--1812, 2016.

\bibitem{hua2018riccati}
M.-D. Hua and G.~Allibert, ``Riccati observer design for pose, linear velocity
  and gravity direction estimation using landmark position and imu
  measurements,'' in \emph{2018 IEEE Conference on Control Technology and
  Applications (CCTA)}.\hskip 1em plus 0.5em minus 0.4em\relax IEEE, 2018, pp.
  1313--1318.

\bibitem{hashim2020LetterSLAM}
H.~A. Hashim, ``Guaranteed performance nonlinear observer for simultaneous
  localization and mapping,'' \emph{IEEE Control Systems Letters}, vol.~5,
  no.~1, pp. 91--96, 2021.

\bibitem{bechlioulis2008robust}
C.~P. Bechlioulis and G.~A. Rovithakis, ``Robust adaptive control of feedback
  linearizable mimo nonlinear systems with prescribed performance,'' \emph{IEEE
  Transactions on Automatic Control}, vol.~53, no.~9, pp. 2090--2099, 2008.

\bibitem{Burri2016Euroc}
M.~Burri, J.~Nikolic, P.~Gohl, T.~Schneider, J.~Rehder, S.~Omari, M.~W.
  Achtelik, and R.~Siegwart, ``The {E}u{R}o{C} micro aerial vehicle datasets,''
  \emph{The International Journal of Robotics Research}, vol.~35, no.~10, pp.
  1157--1163, 2016.

\bibitem{hashim2019AtiitudeSurvey}
H.~A. Hashim, ``Special orthogonal group {SO}(3), euler angles, angle-axis,
  rodriguez vector and unit-quaternion: Overview, mapping and challenges,''
  \emph{ar{X}iv preprint ar{X}iv:1909.06669}, 2019.

\end{thebibliography}
\end{document}